\newcommand{\E}{\mathds{E}}
\renewcommand{\P}{\mathrm{P}}
\newcommand{\x}{\mathbf{x}}
\newcommand{\R}{\mathbb{R}}
\newcommand{\1}{\mathbb{1}}
\newcommand{\LI}{\mathrm{LI}}
\newcommand{\cmark}{\textcolor{green}{\ding{51}}}%
\newcommand{\xmark}{\textcolor{red}{\ding{55}}}
\newtheorem{theorem}{Theorem}
\newtheorem{theorem-app}{Theorem}
\newtheorem{proposition}{Proposition}
\theoremstyle{definition}
\newtheorem{definition}{Definition}
\title{Characterizing Graph Datasets for Node Classification: \\ Homophily--Heterophily Dichotomy and Beyond}
\author{
Oleg Platonov\thanks{Equal contribution} \\
HSE University \\
Yandex Research \\
\texttt{olegplatonov@yandex-team.ru}
\And
Denis Kuznedelev\\
Yandex Research\\
Skoltech\\
\texttt{dkuznedelev@yandex-team.ru}
\AND
Artem Babenko\\
Yandex Research\\
\texttt{artem.babenko@phystech.edu}
\And
Liudmila Prokhorenkova$^\ast$\thanks{Corresponding author} \\
Yandex Research \\
\texttt{ostroumova-la@yandex-team.ru} \\
}
\begin{document}

\maketitle

\begin{abstract}
\emph{Homophily} is a graph property describing the tendency of edges to connect similar nodes; the opposite is called \emph{heterophily}. It is often believed that heterophilous graphs are challenging for standard message-passing graph neural networks (GNNs), and much effort has been put into developing efficient methods for this setting. However, there is no universally agreed-upon measure of homophily in the literature. In this work, we show that commonly used homophily measures have critical drawbacks preventing the comparison of homophily levels across different datasets. For this, we formalize desirable properties for a proper homophily measure and verify which measures satisfy which properties. In particular, we show that a measure that we call \emph{adjusted homophily} satisfies more desirable properties than other popular homophily measures while being rarely used in graph machine learning literature. Then, we go beyond the homophily--heterophily dichotomy and propose a new characteristic that allows one to further distinguish different sorts of heterophily. The proposed \emph{label informativeness} (LI) characterizes how much information a neighbor's label provides about a node's label. We prove that this measure satisfies important desirable properties. We also observe empirically that $\mathrm{LI}$ better agrees with GNN performance compared to homophily measures, which confirms that it is a useful characteristic of the graph structure.
\end{abstract}

\section{Introduction}\label{sec:introductio}

Graphs are a natural way to represent data from various domains such as social networks, citation networks, molecules, transportation networks, text, code, and others. Machine learning on graph-structured data has experienced significant growth in recent years, with Graph Neural Networks (GNNs) showing particularly strong results. Many variants of GNNs have been proposed \cite{kipf2016semi, hamilton2017inductive, velivckovic2017graph, xu2018powerful}, most of them can be unified by a general Message Passing Neural Networks (MPNNs) framework~\cite{gilmer2017neural}. MPNNs combine node features (attributes) with graph topology to learn complex dependencies between the nodes. For this, MPNNs iteratively update the representation of each node by aggregating information from the previous-layer representations of the node itself and its neighbors. 

In many real-world networks, edges tend to connect similar nodes: users in social networks tend to connect to users with similar interests, and papers in citation networks mostly cite works from the same research area. This property is usually called \emph{homophily}. The opposite of homophily is \emph{heterophily}: for instance, in social networks, fraudsters rarely connect to other fraudsters, while in dating networks, edges often connect the opposite genders. Early research on GNNs mainly focused on homophilous graphs, and recently there have been discussions whether specialized models are needed for the heterophilous setting~\cite{abu2019mixhop,pei2020geom,zhu2020beyond,ma2021homophily,luan2021heterophily,platonov2022critical}.

To measure the level of homophily, several \emph{homophily measures} are used in the literature~\cite{abu2019mixhop,lim2021large,pei2020geom,zhu2020beyond}, but these measures may significantly disagree with each other. In this work, we start by addressing the problem of how to properly measure the homophily level of a graph. For this, we formalize some desirable properties of a reasonable homophily measure and check which measures satisfy which properties. One essential property is called \emph{constant baseline} and, informally speaking, it requires a measure to be not biased towards particular numbers of classes or their size balance. Our analysis reveals that commonly used homophily measures do not satisfy this property and thus cannot be compared across different datasets. In contrast, a measure that we call \emph{adjusted homophily} (a.k.a.~\emph{assortativity coefficient}) satisfies most of the desirable properties while being rarely used in graph ML literature. Based on our theoretical analysis, we advise using adjusted homophily as a better alternative to the commonly used measures.

Then, we note that heterophilous datasets may have various connectivity patterns, and some of them are easier for GNNs than others. Motivated by that, we propose a new graph property called \emph{label informativeness} (LI) that allows one to further distinguish different sorts of heterophily. LI characterizes how much information the neighbor's label provides about the node's label. We analyze this measure via the same theoretical framework and show that it satisfies the constant baseline property and thus is comparable across datasets. We also observe empirically that $\mathrm{LI}$ better agrees with GNN performance than homophily measures. Thus, while being very simple to compute, $\mathrm{LI}$ intuitively illustrates why GNNs can sometimes perform well on heterophilous datasets~--- a phenomenon recently observed in the literature. While LI is not a measure of homophily, it naturally complements adjusted homophily by distinguishing different heterophily patterns.

In summary, we propose a theoretical framework that allows for an informed choice of suitable characteristics describing graph connectivity patterns in node classification tasks. Based on this framework, we suggest using adjusted homophily to measure whether similar nodes tend to be connected. To further characterize the datasets and distinguish different sorts of heterophily, we propose a new measure called label informativeness.

\section{Homophily measures}\label{sec:homophily}

Assume that we are given a graph $G = (V,E)$ with nodes $V$, $|V| = n$, and edges $E$. Throughout the paper, we assume that the graph is simple (without self-loops and multiple edges) and undirected.\footnote{We further denote (unordered) edges by $\{u,v\}$ and ordered pairs of nodes by $(u,v)$.} Each node $v \in V$ has a feature vector $\x_v \in \R^m$ and a class label $y_v \in \{1, \ldots, C\}$. Let $n_k$ denote the size of $k$-th class, i.e., $n_k = |\{v: y_v = k\}|$. By $N(v)$ we denote the neighbors of $v$ in $G$ and by $d(v) = |N(v)|$ the degree of $v$. Also, let $D_k := \sum_{v\,:\,y_v = k} d(v)$. Let $p(\cdot)$ denote the empirical distribution of class labels, i.e., $p(k) = \frac{n_k}{n}$. Then, we also define degree-weighted distribution as $\bar{p}(k) = \frac{\sum_{v\,:\,y_v = k} d(v)}{2|E|} = \frac{D_k}{2|E|}$. 

\subsection{Popular homophily measures}
Many GNN models implicitly make a so-called \emph{homophily} assumption: that similar nodes are connected. Similarity can be considered in terms of node features or node labels. Usually, \emph{label homophily} is analyzed, and we also focus on this aspect, leaving \emph{feature homophily} for further studies. There are several commonly used homophily measures in the literature. \emph{Edge homophily}~\cite{abu2019mixhop, zhu2020beyond} is the fraction of edges that connect nodes of the same class:
\begin{equation}
h_{edge} = \frac{|\{\{u,v\} \in E: y_u = y_v\}|}{|E|}\,.
\end{equation}
\emph{Node homophily}~\cite{pei2020geom} computes the fraction of neighbors that have the same class for all nodes and then averages these values across the nodes:
\[
h_{node} = \frac{1}{n} \sum_{v \in V} \frac{|\{u \in N(v): y_u = y_v \}|}{d(v)}\,.
\]
These two measures are intuitive but have the downside of being sensitive to the number of classes and their balance, which makes them hard to interpret and incomparable across different datasets~\cite{lim2021large}. For example, suppose that each node in a graph is connected to one node of each class. Then, both edge homophily and node homophily for this graph will be equal to $\frac{1}{C}$. Thus, these metrics will produce widely different values for graphs with different numbers of classes, despite these graphs being similar in exhibiting no homophily. To fix these issues, \citet{lim2021large} propose another homophily measure sometimes referred to as \emph{class homophily}~\cite{luan2021heterophily}. Class homophily measures excess homophily compared to a null model where edges are independent of the labels. More formally,
\begin{equation*}
h_{class} = \frac{1}{C-1}\sum_{k=1}^{C} \left[\frac{\sum_{v:y_v=k} |\{u \in N(v): y_u = y_v \}|}{\sum_{v:y_v=k} d(v)} - \frac{n_k}{n} \right]_+,
\end{equation*}
where $[x]_+ = \max\{x,0\}$. The factor $\frac{1}{C-1}$ scales $h_{class}$ to the interval $[0,1]$; larger values indicate more homophilous graphs and non-homophilous ones are expected to have close to zero values.

However, there are still some issues with class homophily. First, when correcting the fraction of intra-class edges by its expected value, class homophily does not consider the variation of node degrees. Indeed, if nodes of class $k$ have, on average, larger degrees than $2 |E| / n$, then the probability that a random edge goes to that class can be significantly larger than $n_k/n$. Second, only positive deviations from $n_k/n$ contribute to class homophily, while classes with heterophilous connectivity patterns are neglected. Let us illustrate these drawbacks of class homophily with a simple example.

\begin{wrapfigure}[10]{r}{0.22\textwidth}
\includegraphics[width=0.2\textwidth]{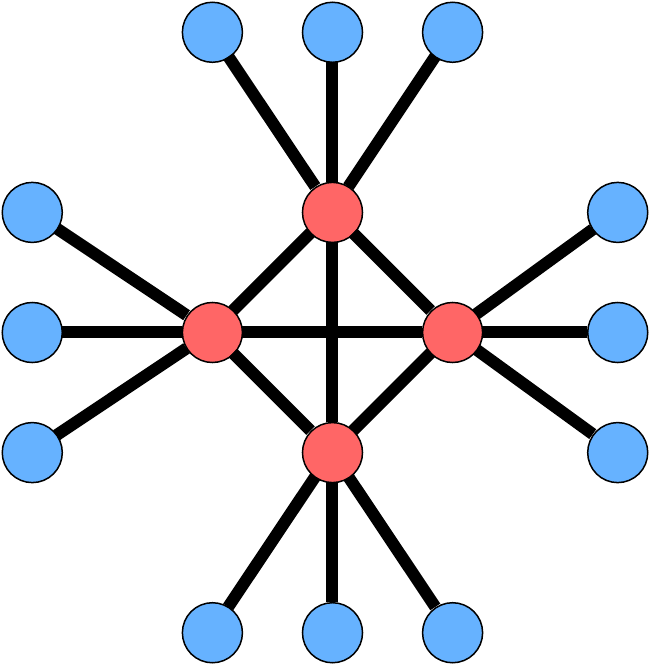}
\end{wrapfigure}
\textbf{Example} Let us construct non-homophilous graphs for which class homophily is significantly larger than zero. First, 
we take a clique of size $r$ with all nodes belonging to the red class; then, for each node in the clique, connect it to $r - 1$ leaves, all of which belong to the blue class (example for $r = 4$ is shown on the right). Note that all blue nodes are strictly heterophilous (i.e., only connect to nodes of the opposite class), while all red nodes are class-agnostic (i.e., have the same number of neighbors of both classes). Such graphs are clearly non-homophilous, and a meaningful homophily measure should not produce a value significantly greater than zero for them. However, class homophily for such graphs is positive and can become as large \hbox{as $\frac{1}{2}$:} $h_{class} = \frac{1}{2} - \frac{1}{r} \to \frac{1}{2}$ as~$r \to \infty$.

\subsection{Desirable properties for homophily measures}\label{sec:properties}

Above, we discussed some disadvantages of existing homophily measures. In this section, we formalize and extend this discussion: we propose a list of properties desirable for a good homophily measure. Our analysis is motivated by recent studies of clustering and classification performance measures \cite{gosgens2021good,gosgens2021systematic}, but not all their properties can be transferred to homophily measures. For instance, we do not require \emph{symmetry}~--- a property that a measure does not change when we swap the compared objects~--- since homophily compares entities of different nature (a graph and a labeling). For the same reason, the \emph{distance} property (requiring a measure to be linearly transformed to a metric distance) cannot be defined. On the other hand, some of our properties are novel.

\paragraph{Maximal agreement} This property requires that perfectly homophilous graphs achieve a constant upper bound of the measure. Formally, we say that a homophily measure $h$ satisfies maximal agreement if for any graph $G$ in which $y_u = y_v$ for all $\{u,v\} \in E$ we have $h(G) = c_{\max}$. For all other graphs $G$, we require $h(G) < c_{\max}$.

\paragraph{Minimal agreement} We say that a homophily measure $h$ satisfies minimal agreement if $h(G) = c_{\min}$ for any graph $G$ in which $y_u \neq y_v$ for all $\{u,v\} \in E$. For all other graphs $G$, we require $h(G) > c_{\min}$. In other words, if all edges connect nodes of different classes, we expect to observe a constant minimal value.

\paragraph{Constant baseline}
This property ensures that homophily is not biased towards particular class size distributions. Intuitively, if the graph structure is independent of labels, we would expect a low homophily value. Moreover, if we want a measure to be comparable across datasets, we expect to observe the same low value in all such cases. There are several ways to formalize the concept of independence, and we suggest the one based on the so-called \emph{configuration model}.

\vspace{5pt}

\begin{definition}\label{def:configuration_model}
\emph{Configuration model} is defined as follows: take $n$ nodes, assign each node $v$ degree $d(v)$, and then randomly pair edge endpoints to obtain a graph.\footnote{See Appendix~\ref{app:configuration_model} for additional discussions about the model.}
\end{definition}

Assuming that we are given $n$ labeled nodes and the graph is constructed according to the configuration model (independently from the labels), we expect to observe a fixed (small) homophily independently of the number of classes and class size balance. We formalize this property as follows and refer to Appendix~\ref{app:constant-baseline} for other possible definitions.

\vspace{5pt}

\begin{definition}\label{def:as_constant_baseline}
A homophily measure $h$ has \emph{asymptotic constant baseline} if for $G$ generated according to the configuration model and for any $\varepsilon > 0$ with probability $1 - o(1)$ we have $|h(G) - c_{base}| < \varepsilon$ for some constant $c_{base}$ as $n \to \infty$.
\end{definition}

In combination with maximal agreement, asymptotic constant baseline makes the values of a homophily measure comparable across different datasets: the maximal agreement guarantees that perfectly homophilous graphs have the same value, while constant baseline aligns the uninformative cases with neither strong homophily nor strong heterophily.

\paragraph{Empty class tolerance} Since homophily measures are used to compare different graph datasets, they have to be comparable across datasets with varying numbers of classes. For this, the following property is required.

\vspace{5pt}

\begin{definition}
A measure is tolerant to empty classes if it is defined and it does not change when we introduce an additional dummy label that is not present in the data.
\end{definition}

For instance, edge homophily and node homophily are empty class tolerant, while class homophily is not. Empty class tolerance is a new property that was not discussed in~\cite{gosgens2021good,gosgens2021systematic} since classification and clustering evaluation measures are used within one given dataset, see Appendix~\ref{app:monotonicity} for more details.

\paragraph{Monotonicity}

As we discuss in Appendix~\ref{app:monotonicity}, it can be non-trivial to define \emph{monotonicity} for homophily measures and there can be different possible options. In this paper, we use the following definition that aligns especially well with edge-wise homophily measures discussed below in Section~\ref{sec:edge-wise-measures}.

\vspace{4pt}

\begin{definition}\label{def:monotonicity}
A homophily measure is \emph{monotone} if it is empty class tolerant, increases when we add an edge between two nodes of the same class (except for perfectly homophilous graphs) and decreases when we add an edge between two nodes of different classes (except for perfectly heterophilous graphs).
\end{definition}

In contrast to~\citet{gosgens2021good,gosgens2021systematic}, our notion of monotonicity requires the property to hold across graphs with different numbers of classes. This is caused by the empty class tolerance property.

\subsection{Properties of popular homophily measures}\label{sec:properties-existing}

Below we briefly discuss the properties of popular homophily measures. In most cases, the proofs are straightforward or follow from Section~\ref{sec:edge-wise-measures} below. Table~\ref{tab:properties} summarizes the results.

\emph{Edge homophily} satisfies maximal and minimal agreement and is empty class tolerant and monotone. However, it does not satisfy asymptotic constant baseline, which is a critical drawback: one can get misleading results in settings with imbalanced classes.

\emph{Node homophily} satisfies maximal and minimal agreement. It is empty class tolerant, but not monotone: adding an edge between two perfectly homophilous nodes of the same class does not change node homophily. Similarly to edge homophily, node homophily does not satisfy the asymptotic constant baseline and thus is incomparable across different datasets.

\emph{Class homophily} satisfies maximal agreement with $h_{class} = 1$, but minimal agreement is not satisfied: not only perfectly heterophilous graphs may have $h_{class} = 0$. Class homophily is not empty class tolerant and thus is not monotone. Additionally, it does not have the asymptotic constant baseline. See Appendix~\ref{app:class-homophily} for the proofs and discussions. Interestingly, removing the $[\cdot]_+$ operation from the definition of class homophily solves the problem with the asymptotic constant baseline, but minimal agreement, empty class tolerance, and monotonicity are still violated.

\begin{table}[t]
\caption{Properties of homophily measures: maximal agreement (Max), minimal agreement (Min), asymptotic constant baseline (ACB), empty class tolerance (ECT), monotonicity (Mon). \xmark* denotes that the property is not satisfied in general, but holds for large $h_{adj}$ (see Theorem~\ref{prop:adjusted-homophily}).}
\vspace{3pt}
\label{tab:properties}
\begin{center}
\begin{tabular}{l|ccccc} 
\toprule
Measure & Max & Min & ACB & ECT & Mon \\
\midrule
Edge homophily & \cmark & \cmark & \xmark & \cmark & \cmark \\
Node homophily & \cmark & \cmark & \xmark & \cmark & \xmark \\
Class homophily & \cmark & \xmark & \xmark & \xmark & \xmark \\
Adjusted homophily & \cmark & \xmark & \cmark & \cmark & \,\,\xmark*  \\ 
\bottomrule
\end{tabular}
\end{center}
\end{table}

\subsection{Adjusted homophily}\label{sec:adjusted_homophily}

Let us discuss a much less known homophily measure that by construction satisfies two important properties~--- maximal agreement and constant baseline. To derive this measure, we start with edge homophily and first enforce the constant baseline property by subtracting the expected value of the measure. Under the configuration model, the probability that a given edge endpoint will be connected to a node with a class $k$ is (up to a negligible term) $\frac{\sum_{v:y_v=k}d(v)}{2|E|} = \frac{D_k}{2|E|}$. Thus, the adjusted value becomes $h_{edge} - \sum_{k=1}^C \frac{D_k^2}{(2|E|)^2}$. Now, to enforce maximal agreement, we normalize the measure as follows:
\begin{equation}\label{eq:adjusted-homophily}
h_{adj} 
= \frac{h_{edge} - \sum_{k=1}^C D_k^2 / (2|E|)^2}{1 - \sum_{k=1}^C D_k^2 / (2|E|)^2} 
= \frac{h_{edge} - \sum_{k=1}^C \bar{p}(k)^2}{1 - \sum_{k=1}^C \bar{p}(k)^2},
\end{equation}
where we use the notation $\bar{p}(k) = \frac{D_k}{2|E|}$ (degree-weighted distribution of class labels). This measure is known in graph analysis literature as \emph{assortativity coefficient}~\cite{newman2003mixing}. While assortativity is a general concept that is often applied to \emph{node degrees}, it reduces to~\eqref{eq:adjusted-homophily} when applied to categorical node attributes on undirected graphs. Unfortunately, this measure is rarely used in graph ML literature (\citet{suresh2021breaking} is the only work we are aware of that uses it for measuring homophily), while our theoretical analysis shows that it satisfies many desirable properties. 
Indeed, the following theorem holds (see Appendix~\ref{app:adjusted-homophily} for the proof).

\vspace{4pt}

\begin{theorem}\label{prop:adjusted-homophily}
Adjusted homophily satisfies maximal agreement, asymptotic constant baseline, and empty class tolerance. The minimal agreement is not satisfied. Moreover, this measure is monotone if $h_{adj} > \frac{\sum_i \bar{p}(i)^2}{\sum_i \bar{p}(i)^2 + 1}$ and we note that the bound $\frac{\sum_i \bar{p}(i)^2}{\sum_i \bar{p}(i)^2 + 1}$ is always smaller than 0.5. When $h_{adj}$ is small, counterexamples to monotonicity exist.
\end{theorem}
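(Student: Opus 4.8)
The plan is to verify each of the claimed properties in turn, most of which follow from the construction of $h_{adj}$ in~\eqref{eq:adjusted-homophily}.

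\textbf{Maximal agreement and empty class tolerance.} For maximal agreement, observe that a graph is perfectly homophilous iff $h_{edge} = 1$, and since $\sum_k \bar p(k)^2 < 1$ whenever at least two classes are present on the edges, plugging $h_{edge} = 1$ into~\eqref{eq:adjusted-homophily} gives $h_{adj} = 1 =: c_{\max}$; for any other graph $h_{edge} < 1$ and the numerator is strictly smaller than the denominator, so $h_{adj} < 1$. (The degenerate one-class case must be handled or excluded, as is standard.) Empty class tolerance is immediate: an empty class $k$ has $D_k = 0$, hence contributes $\bar p(k) = 0$ to $\sum_k \bar p(k)^2$ and nothing to $h_{edge}$, so the value is unchanged.

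\textbf{Asymptotic constant baseline.} Under the configuration model with the labels fixed, each of the $2|E|$ edge-endpoints is matched to a uniformly random half-edge, so the probability a given endpoint lands in class $k$ is $D_k/(2|E|) = \bar p(k)$. One shows $h_{edge} \to \sum_k \bar p(k)^2$ in probability as $n \to \infty$: the expectation of $h_{edge}$ is $\sum_k \bar p(k)^2 + O(1/|E|)$, and a second-moment / concentration argument (pairs of edges are nearly independent under the configuration model) gives vanishing variance, so with probability $1-o(1)$ we have $h_{adj} \to 0 =: c_{base}$. This step will reuse the general machinery for edge-wise measures in Section~\ref{sec:edge-wise-measures}, so I would cite that rather than redo the concentration estimate here.

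\textbf{Failure of minimal agreement.} It suffices to exhibit two perfectly heterophilous graphs with different $h_{adj}$ values, or one perfectly heterophilous graph with $h_{adj} > -1$ while the infimum over such graphs is strictly smaller — e.g., a balanced bipartite-by-class graph gives $h_{edge} = 0$ and $\sum_k \bar p(k)^2 = 1/2$, so $h_{adj} = -1$, whereas a heterophilous graph with skewed degree-weighted class sizes gives $h_{edge} = 0$ but $\sum_k \bar p(k)^2$ close to $1$, hence $h_{adj}$ close to $0$; thus no single constant $c_{\min}$ is attained by all perfectly heterophilous graphs.

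\textbf{Monotonicity and its threshold.} This is the substantive part. Writing $S := \sum_i \bar p(i)^2$ and $h_{edge}$ in terms of the class adjacency matrix $\mathcal{C}$ (both $h_{edge}$ and $S$ are functions of $\mathcal C$, so $h_{adj}$ is edge-wise), I would compute the discrete derivative of $h_{adj}$ under the two elementary moves of Definition~\ref{def:monotonicity}: incrementing a diagonal entry $c_{ii}$ by $2$, and incrementing $c_{ij}, c_{ji}$ by $1$ each for $i \neq j$. Each move changes $|E|$ by $1$, changes the numerator count of same-class edges by $1$ or $0$, and perturbs $S$ through the changes in $D_i$ (and $D_j$). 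The resulting finite difference of the ratio has a sign governed by an inequality that, after clearing denominators, reduces to a linear condition in $h_{adj}$; rearranging yields exactly the threshold $h_{adj} > S/(S+1)$ for the diagonal move to strictly increase the measure, and a parallel computation handles the off-diagonal move. To see $S/(S+1) < 1/2$, note this is equivalent to $S < 1$, which holds whenever the degree-weighted label distribution is non-degenerate. Finally, for the converse I would construct an explicit small counterexample — a graph with low $h_{adj}$ where adding a within-class edge decreases $h_{adj}$ because the induced increase in $S$ (a class whose degree-weighted share grows) outweighs the gain in $h_{edge}$ — confirming that the threshold is not an artifact of the estimate.

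\textbf{Main obstacle.} The delicate step is the monotonicity computation: the finite differences of $h_{edge}$ and of $S$ interact through the shared denominator $1-S$, and one must be careful that incrementing $c_{ii}$ raises $D_i$ (hence $S$) in a way that partially cancels the gain in the numerator; getting the algebra to collapse cleanly to the stated threshold, and separately checking that the off-diagonal move gives a compatible (weaker) condition so that the single bound $S/(S+1)$ suffices for full monotonicity, is where the real work lies.
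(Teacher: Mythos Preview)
Your treatment of maximal agreement, empty class tolerance, asymptotic constant baseline, and the failure of minimal agreement is essentially the paper's approach (though in your minimal-agreement example the arithmetic is off: with $h_{edge}=0$ and $S=\sum_k\bar p(k)^2$ close to $1$, $h_{adj}=-S/(1-S)$ is very negative, not close to $0$; the correct contrast is between $S=1/2$ giving $h_{adj}=-1$ and small $S$ giving $h_{adj}$ near $0$).

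The monotonicity part, however, has the two elementary moves swapped, and this matters. Writing $N=2|E|$, $a_i=\sum_j c_{ij}$, so that $h_{adj}=\frac{N\sum_i c_{ii}-\sum_i a_i^2}{N^2-\sum_i a_i^2}$, the paper differentiates with respect to $c_{kk}$ and finds the sign condition reduces to $N^2+\sum_i a_i^2>2Na_k$, which \emph{always} holds. So adding a within-class edge increases $h_{adj}$ unconditionally; no threshold arises there, and your proposed counterexample (``adding a within-class edge decreases $h_{adj}$'') cannot exist. The threshold $S/(S+1)$ comes entirely from the \emph{off-diagonal} move: differentiating with respect to $c_{kl}$ (with $c_{lk}$) yields a sign condition that is not automatic, and a sufficient condition for the derivative to be negative is $\sum_i c_{ii}(\sum_i a_i^2+N^2)>2N\sum_i a_i^2$, which rewrites as $h_{adj}>S/(S+1)$. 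Correspondingly, the paper's counterexample is a graph with small $h_{adj}$ where adding a \emph{between}-class edge \emph{increases} $h_{adj}$ (four classes, $c_{23}=c_{32}=M$, $c_{33}=2$, then increment $c_{01},c_{10}$). Your plan to extract the threshold from the diagonal move and treat the off-diagonal as the weaker condition would not go through.
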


While adjusted homophily violates some properties, it still dominates all other existing measures and is comparable across different datasets with varying numbers of classes and class size balance. Thus, we recommend using it as a measure of homophily in further works. We also believe that investigating whether it is possible to construct a homophily measure that satisfies all the desirable properties is an important question for future work. 

\subsection{Edge-wise homophily vs classification evaluation measures}\label{sec:edge-wise-measures}

We conclude the analysis of homophily measures by establishing a connection between them and classification evaluation measures~\cite{gosgens2021good}. For this, let us first define \emph{edge-wise} homophily measures. We say that a homophily measure is edge-wise if it is a function of the \emph{class adjacency matrix} that we now define. Since we consider undirected graphs, each edge $\{u,v\} \in E$ gives two ordered pairs of nodes $(u,v)$ and $(v,u)$. We can define a class adjacency matrix $\mathcal{C}$ as follows: each matrix element $c_{ij}$ denotes the number of edges $(u,v)$ such that $y_u = i$ and $y_v = j$. Since the graph is undirected, the matrix $\mathcal{C}$ is symmetric. Note that monotonicity can be naturally put in terms of the class adjacency matrix: adding an edge between two nodes of the same class $i$ corresponds to incrementing the diagonal element $c_{ii}$ by two, and adding an edge between two nodes of different classes $i$ and $j$ corresponds to incrementing $c_{ij}$ and $c_{ji}$ by one.

Now, for each edge $(u,v)$, let us say that $y_u$ is a \emph{true label} (for some object) and $y_v$ is a \emph{predicted label}. Then, any classification evaluation measure (e.g., accuracy) applied to this dataset is a measure of homophily. Based on that, we get the following correspondence.

Clearly, \emph{accuracy} corresponds to \emph{edge homophily} $h_{edge}$.

Interestingly, both \emph{Cohen’s Kappa} and \emph{Matthews correlation coefficient} correspond to \emph{adjusted homophily}. As argued in~\cite{gosgens2021good}, the Matthews coefficient is one of the best classification evaluation measures in terms of its theoretical properties. Our extended analysis confirms this conclusion for the corresponding homophily measure: we prove a stronger version of asymptotic constant baseline and also establish a stronger variant of monotonicity in the interval of large values of $h_{adj}$. The latter result is essential since in~\cite{gosgens2021good} it was only claimed that monotonicity is violated when $C>2$.

Another measure advised by~\cite{gosgens2021good} is \emph{symmetric balanced accuracy}. Since in our case the class adjacency matrix is symmetric, it gives the same expression as \emph{balanced accuracy}:
$
h_{bal} = \frac{1}{C} \sum_{k=1}^{C} \frac{|(u,v): y_u = y_v = k|}{D_k}.
$
The obtained measure satisfies the maximal and minimal agreement properties. However, it is not empty class tolerant and thus is not monotone. The asymptotic constant baseline is also not satisfied: the value $c_{base} = 1/C$ depends on the number of classes. Thus, despite this measure being suitable for classification evaluation, it cannot be used as a homophily measure. This difference is caused by the fact that homophily measures have to be comparable across datasets with different numbers of classes, while for classification evaluation it is not required.

Finally, note that similarly to our derivations in Section~\ref{sec:adjusted_homophily}, the value $h_{bal}$ can be adjusted to have both maximal agreement and constant baseline. Interestingly, this would lead to a slighlty modified class homophily with $[\cdot]_+$ operation removed. As discussed in Section~\ref{sec:properties-existing}, the obtained measure satisfies only the maximal agreement and constant baseline.

%\begin{equation}\label{eq:class_adj}
%h_{bal}^{adj} := \frac{C}{C-1} \left( h_{bal} - \frac 1 C \right).
%\end{equation}

To conclude, there is a correspondence between edge-wise homophily measures and classification evaluation measures. Adjusted homophily corresponds to both Cohen's Kappa and Matthews coefficient. In terms of the satisfied properties, adjusted homophily dominates all other measures derived from this correspondence.

Finally, we also note that homophily measures can be directly related to \emph{community detection evaluation measures} including the well-established characteristic in graph community detection literature called \emph{modularity}~\cite{newman2004finding}. See Appendix~\ref{app:modularity} for a detailed discussion.
%While modularity measures how well a partition fits a given graph, homophily measures how well graph edges agree with the partition (class labels). It turns out that adjusted homophily measure is a normalized version of modularity. We give a detailed discussion of this in Appendix~\ref{app:modularity}.

\section{Label informativeness}\label{sec:LI}

\begin{figure*}
\centering
\begin{subfigure}{0.45\linewidth}
\centering
\includegraphics[width=0.9\textwidth]{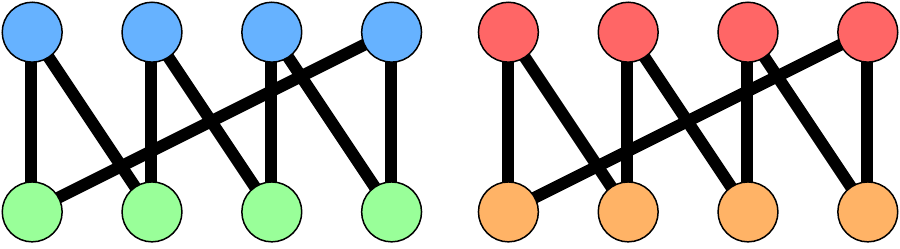}
\subcaption{Informative neighbors}\label{fig:graph1}
\end{subfigure}
\begin{subfigure}{0.45\linewidth}
\centering
\includegraphics[width=0.9\textwidth]{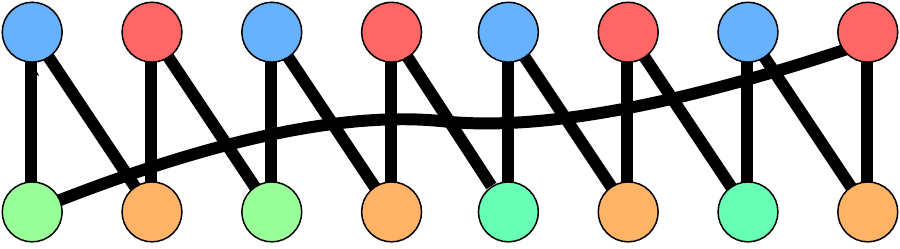}
\subcaption{Less informative neighbors}\label{fig:graph2}
\end{subfigure}
\caption{Non-homophilous graphs with different connection patterns}
\end{figure*}

In the previous section, we discussed in detail how to properly measure the homophily level of a graph. While homophily indicates whether similar nodes are connected, heterophily is defined as the negation of homophily. Thus, heterophilous graphs may have very different connectivity patterns. In this section, we characterize such patterns.

To give an example, among strictly heterophilous graphs in which nodes never connect to nodes of the same class, there can be those where edges are drawn between particular pairs of classes (Figure~\ref{fig:graph1}) and those where the class of a node cannot be derived from the class of its neighbor (Figure~\ref{fig:graph2}). While adjusted homophily correctly captures the absence of homophily in these graphs, it is not designed to identify which type they belong to. However, distinguishing such graphs is practically important: informative neighbors can be very useful for models accounting for the graph structure.

We define a characteristic measuring the informativeness of a neighbor's label for a node's label. For example, in Figure~\ref{fig:graph1}, the neighbor's label uniquely defines the node's label. Thus, the node classification task is simple on this dataset, and we want our informativeness to be maximal for such graphs. Let us formalize this idea. Assume that we sample an edge $(\xi,\eta) \in E$ (from some distribution). The class labels of nodes $\xi$ and $\eta$ are then random variables $y_\xi$ and $y_\eta$. We want to measure the amount of knowledge the label $y_\eta$ gives for predicting $y_\xi$. The entropy $H(y_\xi)$ measures the `hardness' of predicting the label of $\xi$ without knowing $y_\eta$. Given $y_\eta$, this value is reduced to the conditional entropy $H(y_\xi|y_\eta)$. In other words, $y_\eta$ reveals $I(y_\xi,y_\eta) = H(y_\xi) - H(y_\xi|y_\eta)$ information about the label. To make the obtained quantity comparable across different datasets, we say that \emph{label informativeness} is the normalized mutual information of $y_{\xi}$ and $y_{\eta}$:
\begin{equation}\label{eq:LI}
\mathrm{LI} := I(y_\xi,y_\eta)/H(y_\xi)\,.
\end{equation}
We have $\mathrm{LI} \in [0,1]$. If the label $y_{\eta}$ allows for unique reconstruction of $y_{\xi}$, then $\mathrm{LI} = 1$. If $y_\xi$ and $y_\eta$ are independent, $\mathrm{LI} = 0$.

Depending on the distribution used for sampling an edge $(\xi, \eta)$, one can obtain several variants of $\mathrm{LI}$. For instance, if the edges are sampled uniformly at random (which is a natural approach), the mutual distribution $(y_\xi, y_\eta)$ for a randomly sampled edge is $p(c_1, c_2) = \sum_{(u,v)\in E} \frac{\1{\{y_u = c_1, y_v = c_2\}}}{2|E|}$. Then, the marginal distribution of $y_\xi$ (and $y_\eta$) is the degree-weighted distribution $\bar{p}(c)$. Thus,~\eqref{eq:LI} becomes:
\begin{equation*}
\mathrm{LI}_{edge} = - \frac{\sum_{c_1,c_2} p(c_1,c_2) \log \frac{p(c_1,c_2)}{\bar{p}(c_1) \bar{p}(c_2)}}{\sum_c \bar{p}(c) \log \bar{p}(c)} = 2 - \frac{\sum_{c_1,c_2} p(c_1,c_2) \log p(c_1,c_2)}{\sum_c \bar{p}(c) \log \bar{p}(c)} \,.
\end{equation*}
For brevity, we further denote $\mathrm{LI}_{edge}$ by $\mathrm{LI}$ and focus 
on this version of the measure. However, note that another natural approach to edge sampling is to first sample a random node and then sample a random edge incident to this node. For a discussion of this approach, we refer to Appendix~\ref{app:LI-def}.

To claim that $\mathrm{LI}$ is a suitable graph characteristic, we need to show that it is comparable across different datasets. For this, we need to verify two properties: maximal agreement and asymptotic constant baseline. Recall that $\mathrm{LI}$ is upper bounded by one and equals one if and only if the neighbor's class uniquely reveals the node's class. This property can be considered as a direct analog of the maximal agreement defined in Section~\ref{sec:properties}. The following theorem shows that $\mathrm{LI}$ satisfies the asymptotic constant baseline; see Appendix~\ref{app:LI-proof} for the proof.

\begin{theorem}\label{prop:LI}
Assume that $|E| \to \infty$ as $n \to \infty$ and that the entropy of $\bar{p}(\cdot)$ is bounded from below by some constant. Let $\bar{p}_{min} = \min_k \bar{p}(k)$ and assume that $\bar{p}_{min} \gg C/\sqrt{|E|}$ as $n \to \infty$. Then, for the random configuration model, we have $\mathrm{LI} = o(1)$ with high probability.
\end{theorem}

In summary, $\mathrm{LI}$ is a simple graph characteristic suitable for comparing different datasets. We note that $\mathrm{LI}$ is not a measure of homophily. $\mathrm{LI}$ naturally complements homophily measures by distinguishing different types of heterophilous graphs.

\section{Empirical illustrations}\label{sec:experiments}

In this section, we first characterize some existing graph datasets in terms of homophily and $\mathrm{LI}$ to see which structural patterns are currently covered. Then, we show that $\mathrm{LI}$, despite being a very simple graph characteristic, much better agrees with GNN performance than homophily.\footnote{Implementations of all the graph measures discussed in the paper and examples of their usage are provided in \href{ https://colab.research.google.com/drive/186KlV8PrWOq_woZVaRWGYC2B10vSm7S2?usp=sharing}{this Colab notebook}.}

\subsection{Characterizing real graph datasets}

We first look at the values of homophily and label informativeness for existing graph datasets. For this analysis, we choose several node classification datasets of different sizes and properties. Statistics of these datasets and values of all the measures discussed in this paper are provided in Table~\ref{tab:datasets} in Appendix~\ref{app:real-datasets}, while Table~\ref{tab:datasets-selected} shows selected results.

Recall that both node and edge homophily are sensitive to the number of classes and class size balance. Indeed, they may indicate high homophily levels for some heterophilous datasets. An extreme example is \texttt{questions}: $h_{edge} = 0.84$, while more reliable adjusted homophily shows that the dataset is heterophilous: $h_{adj} = 0.02$. In fact, all binary classification datasets in Table~\ref{tab:datasets} could be considered homophilous if one chooses $h_{edge}$ or $h_{node}$ as the measure of homophily. In contrast, $h_{adj}$ satisfies the constant baseline property and shows that most of the considered binary classification datasets are heterophilous.

\begin{wraptable}[14]{r}{0.5\linewidth}
\vspace{-14pt}
\caption{Characteristics of some real graph datasets, see Table~\ref{tab:datasets} for the full results}
\label{tab:datasets-selected}
\vspace{-2pt}
\centering
\begin{tabular}{lrrrrrrrrr}
\toprule
Dataset & $C$ & $h_{edge}$ & $h_{adj}$ & $\mathrm{LI}$ \\
\midrule
lastfm-asia & 18 & 0.87 & 0.86 & 0.74 \\
cora & 7 & 0.81 & 0.77 & 0.59 \\
ogbn-arxiv & 40 & 0.65 & 0.59 & 0.45 \\
twitter-hate & 2 & 0.78 & 0.55 & 0.23 \\
wiki & 5 & 0.38 & 0.15 & 0.06 \\
twitch-gamers & 2 & 0.55 & 0.09 & 0.01 \\
actor & 5 & 0.22 & 0.00 & 0.00  \\
questions & 2 & 0.84 & 0.02 & 0.00 \\
roman-empire & 18 & 0.05 & -0.05 & 0.11 \\
\bottomrule
\end{tabular}
\end{wraptable}

It is expected that datasets with high homophily $h_{adj}$ also have high $\LI$ since homophily implies informative neighbor classes. For medium-level homophily, $\LI$ can behave differently: for instance, \texttt{ogbn-arxiv} and \texttt{twitter-hate} have similar homophily levels, while the neighbors in \texttt{ogbn-arxiv} are significantly more informative. For heterophilous datasets, $\LI$ can potentially be very different, as we demonstrate in Section~\ref{sec:correlation} on synthetic and semi-synthetic data. However, most existing heterophilous datasets have $\mathrm{LI} \approx 0$. This issue is partially addressed by new heterophilous datasets recently proposed in~\cite{platonov2022critical}. For the proposed \texttt{roman-empire} dataset, $\LI = 0.11$ and $h_{adj} = -0.05$. Thus, while being heterophilous, this dataset has non-zero label informativeness, meaning that neighboring classes are somewhat informative. We believe that datasets with more interesting connectivity patterns will be collected in the future.

\subsection{Correlation of LI with GNN performance}\label{sec:correlation}

Recently, it has been shown that standard GNNs can sometimes perform well on non-homophilous datasets~\cite{luan2021heterophily, ma2021homophily}. We hypothesize that GNNs can learn more complex relationships between nodes than just homophily, and they will perform well as long as the node's neighbors provide some information about this node. Thus, we expect $\mathrm{LI}$ to better correlate with the performance of GNNs than homophily. To illustrate this, we use carefully designed synthetic and semi-synthetic data. First, it allows us to cover all combinations of homophily levels and label informativeness. Second, we can control that only a connection pattern changes while other factors affecting the performance are fixed.

\vspace{-1pt}

\paragraph{Synthetic data based on SBM model}\label{sec:synthetic}
To start with the most simple and controllable setting, we generate synthetic graphs via a variant of the \textit{stochastic block model} (SBM)~\cite{holland1983stochastic}. In this model, the nodes are divided into $C$ clusters, and for each pair of nodes $i,j$, we draw an edge between them with probability $p_{c(i),c(j)}$ independently of all other edges. Here $c(i)$ is a cluster assignment for a node $i$, which in our case corresponds to the node label $y_i$. 

We set the number of classes to $C = 4$ and the class size to $l = n/4$. We define the probabilities as follows: $p_{i,j} = p_0 K$ if $i = j$, $p_{i,j} = p_1 K$ if $i + j = 5$, and $p_{i,j} = p_2 K$ otherwise. Here $K > 0$ and we require $p_0 + p_1 + 2p_2 = 1$. Note that the expected degree of any node is (up to a negligibly small term) $p_0Kl + p_1Kl + 2p_2Kl = Kl$.

This model allows us to explore various combinations of dataset characteristics. Indeed, $p_0$ directly controls the homophily level, while the relation between $p_1$ and $p_2$ enables us to vary $\mathrm{LI}$. To see this, we note that the condition $i + j = 5$ gives two pairs of classes: $(1,4)$ and $(2,3)$. Thus, if $p_2 = 0$ and $p_1 > 0$, knowing the label of any neighbor from another class, we can uniquely restore the node's label. In contrast, for given $p_0$, the case $p_1 = p_2$ gives the smallest amount of additional information. The following proposition characterizes the covered combinations of $\LI$ and homophily; the proof follows from the construction procedure.

\begin{proposition}
As $n\to\infty$, the dataset characteristics of the proposed model converge to the following values (with high probability):
$h_{adj} = \frac{4}{3}p_0 - \frac{1}{3}$, 
%\\
%\mathrm{THH} &= p_0^2 + p_1^2 + 2 p_2^2 - \frac{1}{4}\,, \\
$\mathrm{LI} = 1 - \frac{H(p_0,p_1,p_2,p_2)}{\log 4}$,
%\LI_{cn} &= 1 - \frac{H(p_0^2+p_1^2+2p_2^2,2p_0p_1+2p_2^2,2p_0p_2+2p_1p_2,2p_0p_2+2p_1p_2)}{\log 4}\,,
%\end{align*}
where $H(\mathrm{x}) = - \sum_i x_i \log(x_i)$.
\end{proposition}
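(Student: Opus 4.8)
The plan is to compute the two target quantities by first identifying the limiting \emph{class adjacency matrix} (equivalently, the joint edge-endpoint label distribution $p(c_1,c_2)$) of the SBM, and then plug this into the closed-form expressions for $h_{adj}$ and $\mathrm{LI}_{edge}$ given earlier in the paper. The key observation is that by the choice $p_{i,j} = p_0 K, p_1 K$, or $p_2 K$ depending on whether $i=j$, $i+j=5$, or otherwise, together with $p_0 + p_1 + 2p_2 = 1$ and equal class sizes $l = n/4$, the expected number of edges between a fixed pair of classes is, up to negligible terms, proportional to $p_0, p_1$, or $p_2$ respectively. A standard concentration argument (Chernoff/Bernstein for sums of independent indicators, then a union bound over the $O(C^2)$ class pairs) shows that all the relevant edge counts concentrate around their expectations, so with high probability the empirical $p(c_1,c_2)$ converges to the deterministic stochastic matrix: each diagonal entry $\to p_0/4$, each of the two ``swap'' off-diagonal entries ($\{1,4\}$ and $\{2,3\}$, counted in both orders) $\to p_1/8$, and each of the remaining off-diagonal entries $\to p_2/8$. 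Likewise, by symmetry every class has the same degree-weighted probability $\bar p(c) \to 1/4$.

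First I would set up the SBM edge counts: for $i \neq j$, $c_{ij}$ is a sum of $l^2$ independent Bernoulli$(p_{ij}K)$ variables with mean $l^2 p_{ij} K$; for $i = j$ it is a sum over $\binom{l}{2}$ pairs. Since the expected degree is $Kl$ and $|E| \to \infty$, these means grow, so relative fluctuations are $o(1)$ with high probability; I would state this as a lemma and defer the routine concentration details. Next I would normalize: $2|E| = \sum_{i,j} c_{ij} \to n \cdot Kl = $ (total degree), and dividing gives $p(c_1,c_2)$ and $\bar p(c) = 1/4$ as above. Then I would substitute into $h_{adj} = \frac{h_{edge} - \sum_k \bar p(k)^2}{1 - \sum_k \bar p(k)^2}$. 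Here $h_{edge} = \sum_c p(c,c) \to 4 \cdot (p_0/4) = p_0$ and $\sum_k \bar p(k)^2 \to 4 \cdot (1/4)^2 = 1/4$, so $h_{adj} \to \frac{p_0 - 1/4}{1 - 1/4} = \frac{4}{3}p_0 - \frac{1}{3}$, as claimed.

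For label informativeness I would use the form $\mathrm{LI} = 2 - \frac{\sum_{c_1,c_2} p(c_1,c_2)\log p(c_1,c_2)}{\sum_c \bar p(c) \log \bar p(c)}$. The denominator is $\sum_c (1/4)\log(1/4) = -\log 4$. For the numerator, group the $16$ entries of $p(c_1,c_2)$ into $4$ diagonal entries equal to $p_0/4$, $4$ ``swap'' entries equal to $p_1/8$, and $8$ ``other'' entries equal to $p_2/8$. A short computation shows $\sum_{c_1,c_2} p(c_1,c_2)\log p(c_1,c_2) = -H(p_0,p_1,p_2,p_2) - \log 4$, where I exploit that $p_0 + p_1 + 2p_2 = 1$ to fold the $\log 4$ offsets from the $1/4$'s and $1/8$'s into a clean form (the $4$ diagonal entries contribute $p_0\log(p_0/4)$, the $8+ ...$ — expanding $\log(p_1/8) = \log p_1 - \log 8$ etc. and collecting terms using the normalization). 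Dividing by $-\log 4$ then yields $\mathrm{LI} = 2 - \frac{H(p_0,p_1,p_2,p_2) + \log 4}{\log 4} = 1 - \frac{H(p_0,p_1,p_2,p_2)}{\log 4}$.

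The main obstacle is not the algebra but making the ``converge with high probability'' statement rigorous and uniform: one must ensure that \emph{all} $O(C^2)$ empirical edge-pair frequencies, as well as $|E|$ itself, concentrate simultaneously, and that the entropy-type functional $x \mapsto \sum x_i \log x_i$ (which is continuous but whose derivative blows up near $0$) transfers this concentration to $\mathrm{LI}$ without trouble — this is fine here because all limiting probabilities that appear with nonzero coefficient are bounded away from $0$ whenever $p_0, p_1, p_2 > 0$, and the degenerate cases ($p_1 = 0$ or $p_2 = 0$) are handled by the convention $0\log 0 = 0$ and a direct check. Since $C = 4$ is fixed and $|E| \to \infty$, a single Chernoff bound plus a union bound suffices, so the rigor is routine; I would relegate it to the appendix and keep the main-text proof at the level of ``the empirical class adjacency matrix converges to the stated deterministic matrix, and substitution into the formulas for $h_{adj}$ and $\mathrm{LI}$ gives the result.''
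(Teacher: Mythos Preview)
Your approach is exactly what the paper has in mind (the paper's ``proof'' is literally the one-line remark ``the proof follows from the construction procedure''), and your plan to concentrate the empirical class adjacency matrix and then plug into the closed forms for $h_{adj}$ and $\mathrm{LI}_{edge}$ is the natural way to fill in that sentence.

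There is, however, a factor-of-two slip in your off-diagonal entries of $p(c_1,c_2)$. With $2|E|\to 4Kl^2$ and the expected number of \emph{ordered} edges from class $i$ to a distinct class $j$ equal to $l^2 p_{ij}K$, you get $p(i,j)\to p_{ij}/4$, not $p_{ij}/8$. Concretely: the four diagonal entries are $p_0/4$, the four ``swap'' entries are $p_1/4$, and the eight remaining entries are $p_2/4$. (A quick sanity check: your stated matrix does not normalize, since $4\cdot\tfrac{p_0}{4}+4\cdot\tfrac{p_1}{8}+8\cdot\tfrac{p_2}{8}=p_0+\tfrac{p_1}{2}+p_2\neq 1$ in general.) With the correct entries the entropy step is clean:
\[
\sum_{c_1,c_2}p(c_1,c_2)\log p(c_1,c_2)
= p_0\log\tfrac{p_0}{4}+p_1\log\tfrac{p_1}{4}+2p_2\log\tfrac{p_2}{4}
= -H(p_0,p_1,p_2,p_2)-\log 4,
\]
using $p_0+p_1+2p_2=1$, which is exactly the identity you asserted. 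So your final formulas for $h_{adj}$ and $\mathrm{LI}$ are right; just fix the intermediate entries so that the ``short computation'' actually goes through.
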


Thus, $h_{adj}$ ranges from $-1/3$ to $1$ and $\LI$ can be between 0 and 1. If $\LI = 0$, then we always have $h_{adj} = 0$; if $h_{adj} = 1$, then $\LI = 1$. However, if $\LI = 1$, then either $h_{adj} = -1/3$ or $h_{adj} = 1$.

We generated graphs according to the procedure described above with the expected node degree of 10 and various combinations of $p_0, p_1, p_2$. Given the class labels, the features are taken from the four largest classes in the \texttt{cora} dataset~\cite{sen2008collective, namata2012query, yang2016revisiting, mccallum2000automating}. We use the obtained graphs to train four popular GNN models: \textbf{GCN}~\cite{kipf2016semi}, \textbf{GraphSAGE}~\cite{hamilton2017inductive}, \textbf{GAT}~\cite{velivckovic2017graph}, and \textbf{Graph Transformer (GT)}~\cite{shi2020masked}. In total, we run more than 20000 experiments on synthetic datasets for each of the four models. A detailed description of data generation, training setup, and hyperparameters used is provided in Appendix~\ref{app:setup}.

\begin{wraptable}{r}{0.35\linewidth}
%\vspace{-12pt}
\caption{Spearman correlation between model accuracy and characteristics of synthetic SBM datasets}
\label{tab:correlations-sbm}
\centering
\begin{tabular}{lcc}
\toprule
Model & $h_{adj}$ & $\mathrm{LI}_{edge}$ \\
\midrule
GCN & 0.19 & \textbf{0.76} \\
GraphSAGE & 0.05 & \textbf{0.93} \\
GAT & 0.17 & \textbf{0.77} \\
GT & 0.17 & \textbf{0.77} \\
\bottomrule
\end{tabular}
\end{wraptable}

Figure~\ref{fig:graphsage} shows the results for GraphSAGE. It can be seen that the performance is much more correlated with $\mathrm{LI}$ than with homophily. In particular, when $\mathrm{LI}$ is high, GraphSAGE achieves good performance even on strongly heterophilous graphs with negative homophily. We refer to Appendix~\ref{app:synthetic} for additional visualizations.

\begin{figure}[t]
\begin{subfigure}{0.48\textwidth}
\includegraphics[width=\textwidth]{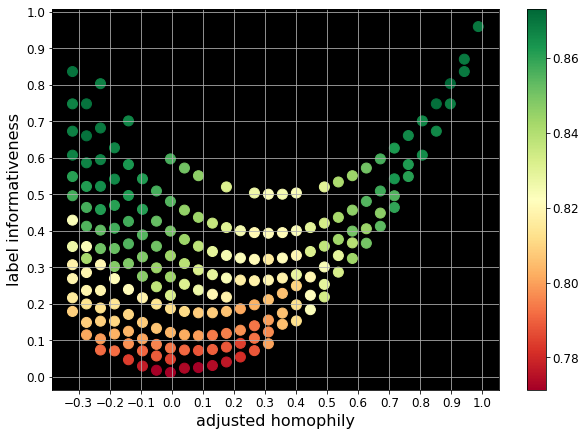}
\subcaption{Synthetic SBM graphs}\label{fig:graphsage}
\end{subfigure}
\begin{subfigure}{0.48\textwidth}
\includegraphics[width=\textwidth]{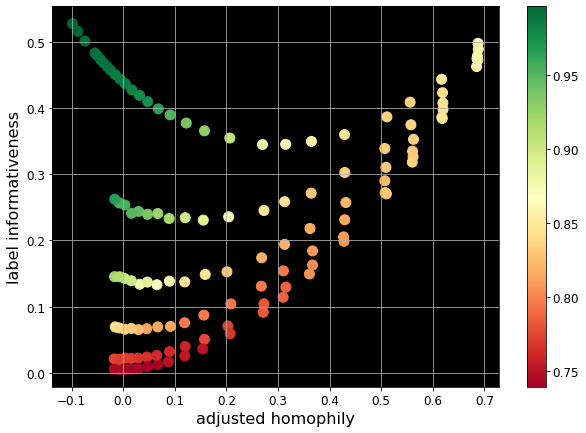}
\subcaption{Semi-synthetic \texttt{cora} graphs from~\cite{ma2021homophily}}\label{fig:cora-graphsage}
\end{subfigure}
\caption{Accuracy of GraphSAGE on synthetic and semi-synthetic graphs}
\end{figure}

The Spearman correlation coefficient between accuracy and $\mathrm{LI}$ is equal to 0.93, while between accuracy and adjusted homophily it equals 0.05. For other considered models, the difference in correlation is also significant; see Table~\ref{tab:correlations-sbm}. Note that due to the balanced classes and degrees, all homophily measures give the same correlation, so we report only $h_{adj}$.

\paragraph{Semi-synthetic data from~\cite{ma2021homophily}}\label{sec:ma}

\citet{ma2021homophily} also argue that standard GNNs can perform well on certain heterophilous graphs. They construct semi-synthetic graphs by adding inter-class edges following different patterns to several real-world graphs, thus obtaining several sets of graphs with varying levels of homophily. \citet{ma2021homophily} run experiments on these graphs and note that a standard GNN achieves strong performance on some heterophilous graphs.

\begin{wraptable}[15]{r}{0.55\textwidth}
\vspace{-12pt}
%\begin{table}
\caption{Spearman correlation between model accuracy and characteristics of semi-synthetic datasets from~\cite{ma2021homophily}}
\label{tab:correlations-semisynthetic}
\centering
\scalebox{0.9}{
\centering
\begin{tabular}{lccccc}
\toprule
Model & $h_{edge}$ & $h_{node}$ & $h_{class}$ & $h_{adj}$ & $\mathrm{LI}_{edge}$ \\
\midrule
& \multicolumn{5}{c}{\texttt{cora}} \\
GCN & -0.31 & -0.31 & -0.31 & -0.31 & \textbf{0.72} \\
GraphSAGE & -0.24 & -0.24 & -0.24 & -0.24 & \textbf{0.78} \\
GAT & -0.24 & -0.25 & -0.24 & -0.24 & \textbf{0.77} \\
GT & -0.23 & -0.24 & -0.23 & -0.23 & \textbf{0.79} \\
\midrule
& \multicolumn{5}{c}{\texttt{citeseer}} \\
GCN & -0.24 & -0.24 & -0.24 & -0.24 & \textbf{0.76} \\
GraphSAGE & -0.53 & -0.53 & -0.54 & -0.54 & \textbf{0.51} \\
GAT & -0.27 & -0.27 & -0.27 & -0.27 & \textbf{0.75} \\
GT & -0.19 & -0.19 & -0.19 & -0.19 & \textbf{0.80} \\
\bottomrule
\end{tabular}}
\vspace{-20pt}
\end{wraptable}

\iffalse
\begin{wraptable}{r}{0.35\textwidth}
%\vspace{-8pt}
%\begin{table}
\caption{Spearman correlation between model accuracy and characteristics of semi-synthetic datasets from~\cite{ma2021homophily}}
\label{tab:correlations-semisynthetic}
\centering
\scalebox{0.9}{
\centering
\begin{tabular}{lcc}
\toprule
Model & $h_{adj}$ & $\mathrm{LI}_{edge}$ \\
\midrule
& \multicolumn{2}{c}{\texttt{cora}} \\
GCN & -0.31 & \textbf{0.72} \\
GraphSAGE & -0.24 & \textbf{0.78} \\
GAT & -0.24 & \textbf{0.77} \\
GT & -0.23 & \textbf{0.79} \\
\midrule
& \multicolumn{2}{c}{\texttt{citeseer}} \\
GCN & -0.24 & \textbf{0.76} \\
GraphSAGE & -0.54 & \textbf{0.51} \\
GAT & -0.27 & \textbf{0.75} \\
GT & -0.19 & \textbf{0.80} \\
\bottomrule
\end{tabular}}
\vspace{-10pt}
\end{wraptable}
\fi

We train GCN, GraphSAGE, GAT, and GT on the same modifications of the \texttt{cora} and \texttt{citeseer} graphs used in~\cite{ma2021homophily} and find that the models achieve strong performance when the graphs have high label informativeness. Training setup and hyperparameters used in these experiments are the same as above and are described in Appendix~\ref{app:setup}. The performance of GraphSAGE on \texttt{cora} is shown in Figure~\ref{fig:cora-graphsage}.  In Appendix~\ref{app:semi-synthetic}, we also show the performance of GraphSAGE on the \texttt{citeseer} dataset. Table~\ref{tab:correlations-semisynthetic} shows the Spearman correlation coefficients between accuracy and various homophily measures or $\mathrm{LI}$ for all models and both datasets. The correlation coefficients for homophily measures are all negative, while for $\mathrm{LI}$ they are positive and sufficiently large. This again confirms that $\mathrm{LI}$ indicates whether graph structure is helpful for GNNs.

Additionally, we conduct experiments with other datasets: Appendix~\ref{app:lfr} describes the results for the LFR benchmark~\cite{lancichinetti2008benchmark} and Appendix~\ref{app:Luan} presents an experiment on synthetic data from~\cite{luan2021heterophily}.

\section{Conclusion}\label{sec:counclusion}

In this paper, we discuss how to characterize graph node classification datasets. First, we revisit the concept of homophily and show that commonly used homophily measures have significant drawbacks preventing the comparison of homophily levels between different datasets. For this, we formalize properties desirable for a good homophily measure and prove which measures satisfy which properties. Based on our analysis, we conclude that \emph{adjusted homophily} is a better measure of homophily than the ones commonly used in the literature. We recommend using it to estimate and compare homophily levels of various graphs in future works.

Then, we argue that heterophilous graphs may have very different structural patterns and propose a new property called \emph{label informativeness} ($\mathrm{LI}$) that allows one to distinguish them. $\LI$ characterizes how much information a neighbor's label provides about a node's label. Similarly to adjusted homophily, this measure satisfies important properties and thus can be used to compare datasets with different numbers of classes and class size balance. Through a series of experiments, we show that $\mathrm{LI}$ correlates well with the performance of GNNs.

To conclude, we believe that adjusted homophily and label informativeness will be helpful for researchers and practitioners as they allow one to easily characterize the connectivity patterns of graph datasets. We also hope that new realistic datasets will be collected to cover currently unexplored combinations of $h_{adj}$ and $\LI$. Finally, our theoretical framework can be helpful for the further development of reliable graph characteristics.

\paragraph{Limitations} We advise using adjusted homophily as a reliable homophily measure since it is the only existing measure having constant baseline and thus it dominates other known alternatives in terms of desirable properties. However, it still violates minimal agreement, and monotonicity is guaranteed only for sufficiently large values of $h_{adj}$. It is currently an open question whether there exist measures dominating adjusted homophily in terms of the satisfied properties.

Regarding LI, we do not claim that this measure is a universal predictor of GNN performance. We designed this measure to be both informative and simple to compute and interpret. For instance, $\mathrm{LI}$ considers all edges individually and does not account for the node's neighborhood as a whole. As a result, $\mathrm{LI}$ can be insensitive to some complex dependencies. Such dependencies can be important for some tasks, but taking them into account is tricky and would significantly complicate the measure. However, we clearly see that despite its simplicity, $\mathrm{LI}$ correlates with GNN performance much better than homophily. 

Let us also note that our analysis of both homophily and $\LI$ is limited to graph-label interactions. In future work, it would be important to also analyze node features. Indeed, node features may have non-trivial relations with both graph and labels. For example, a graph can be heterophilous in terms of labels but homophilous in terms of node features or vice versa. These interactions may allow one to understand the properties and performance of GNNs even better. However, analyzing feature-based homophily or informativeness can be much more difficult since the features can differ in nature, scale, and type.

\section*{Acknowledgements}

We thank Yao Ma for sharing the semi-synthetic datasets from~\citet{ma2021homophily} that we used in Section~\ref{sec:ma}. We also thank Andrey Ploskonosov for thoughtful discussions.

\bibliography{references}
\bibliographystyle{abbrvnat}

%%%%%%%%%%%%%%%%%%%%%%%%%%%%%%%%%%%%%%%%%%%%%%%%%%%%%%%%%%%%%%%%%%%%%%%%%%%%%%%
%%%%%%%%%%%%%%%%%%%%%%%%%%%%%%%%%%%%%%%%%%%%%%%%%%%%%%%%%%%%%%%%%%%%%%%%%%%%%%%
% APPENDIX
%%%%%%%%%%%%%%%%%%%%%%%%%%%%%%%%%%%%%%%%%%%%%%%%%%%%%%%%%%%%%%%%%%%%%%%%%%%%%%%
%%%%%%%%%%%%%%%%%%%%%%%%%%%%%%%%%%%%%%%%%%%%%%%%%%%%%%%%%%%%%%%%%%%%%%%%%%%%%%%
\newpage
\appendix
%\onecolumn

\section{Random configuration model}\label{app:configuration_model}

Numerous random graph models have been proposed to reflect and predict important quantitative and topological aspects of real-world networks~\cite{boccaletti2006complex}. The simplest model is the Erd{\H{o}}s--R{\'e}nyi random graph, i.e., we assume that $G$ is sampled uniformly at random from the set of all graphs with $n$ nodes and $|E|$ edges. However, this model is known to be not a good descriptor of real-world networks since its Poisson degree distribution significantly differs from heavy-tailed degree distributions observed in real-world networks. A standard solution is to consider a random graph with a given degree sequence~\cite{molloy1995critical}: a graph is sampled uniformly from the set of all graphs with a given degree sequence. 

A~\emph{configuration model}, which we assume throughout the paper, is defined as follows. To generate a graph, we form a set $A$ containing $d(v)$ distinct copies of each node $v$ and then choose a random matching of the elements of $A$. In this case, self-loops and multiple edges may appear. Under some conditions, the obtained graph is simple (i.e., does not contain self-loops and multiple edges) with probability $1-o(1)$~\cite{molloy1995critical}.

Let us also note that there is another model that is similar to the one discussed above but can be easier to analyze. To obtain \emph{a graph with given expected degree sequence}~\cite{chung2002connected}, we take the degree sequence from the observed graph and say that the number of edges between $i$ and $j$ follows a Poisson distribution with the mean $\frac{d(i)d(j)}{2|E|}$ if $i \neq i$ and the expected number of self-loops for a node $i$ is $\frac{d(i)^2}{4|E|}$. This model does not preserve the exact degree sequence but has it in expectation. Note that usually $d(i)d(j) \ll 2|E|$, so multiple edges rarely appear. Asymptotically, this model is similar to the configuration model, but a graph with a given expected degree sequence is easier to analyze theoretically. In our analysis, we assume the configuration model, so we have to track the error terms carefully.

\section{Analysis of homophily}\label{app:homophily}

\subsection{Class homophily}\label{app:class-homophily}

Recall that class homophily is defined as~\citep{lim2021large}:
\begin{equation*}
h_{class} = \frac{1}{C-1}\sum_{k=1}^{C} \left[\frac{\sum\limits_{v:y_v=k} |\{u \in N(v): y_u = y_v \}|}{\sum_{v:y_v=k} d(v)} - \frac{n_k}{n} \right]_+.
\end{equation*}

Here the first term inside the brackets is the fraction of edges that go from a particular class $k$ to itself. The second term $n_k/n = p(k)$ is the null expected fraction. We note that $p(k)$ is the expected fraction if we assume the Erd{\H{o}}s--R{\'e}nyi model. Indeed, for this model, all edges have equal probability, and thus the expected fraction of neighbors of class $k$ is proportional to the size of this class. However, as mentioned above, the Erd{\H{o}}s--R{\'e}nyi null model has certain disadvantages since it does not take into account node degrees. This may lead to incorrect estimates of null probabilities for degree-imbalanced classes, as shown in the example below.\footnote{For simplicity, in these statements, we use the random graph model with a given expected degree sequence as it allows for simpler illustrations. For the configuration model, the error terms can be tracked similarly to the proofs in Sections~\ref{app:adjusted-homophily} and~\ref{app:LI-proof}.}

\begin{proposition}
Assume that we have two classes of size $n/2$. Further, assume that the expected degrees of nodes in the first class are equal to $d$, while nodes in the second class have expected degrees $l d$ for some $l>1$. Then, if edges are added independently of the classes, the expected value of $h_{class}$ is
\[
\E h_{class} = \frac{l}{l+1} - \frac{1}{2}\,.
\]
Thus, for randomly connected nodes we get $\E h_{class} \to 1/2$ as $l \to \infty$.
\end{proposition}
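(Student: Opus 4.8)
The plan is to compute the expectation of the (unnormalized) excess intra-class edge fraction for each of the two classes under the configuration model, and then observe that only the heavier class contributes a positive term to $h_{class}$. Fix the setup: class $1$ has $n/2$ nodes of degree $d$, class $2$ has $n/2$ nodes of degree $ld$, so the total number of edge-endpoints is $2|E| = \frac{n}{2}d + \frac{n}{2}ld = \frac{nd(1+l)}{2}$, hence $|E| = \frac{nd(1+l)}{4}$. Under the configuration model, each of the $D_1 = \frac{nd}{2}$ endpoints attached to class-$1$ nodes is matched to a uniformly random remaining endpoint; the probability that its partner is also a class-$1$ endpoint is (up to $O(1/n)$ corrections from sampling without replacement) $\frac{D_1}{2|E|} = \frac{1}{1+l}$. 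Therefore the expected number of intra-class-$1$ edge-endpoints is $\frac{nd}{2}\cdot\frac{1}{1+l}(1+o(1))$, and the first bracketed term for $k=1$, namely $\frac{(\text{intra-class-1 endpoints})}{D_1}$, has expectation $\frac{1}{1+l} + o(1)$. Since the null correction is $n_1/n = \tfrac12$ and $\frac{1}{1+l} < \tfrac12$ for $l>1$, the $[\cdot]_+$ kills this term.

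For class $2$ the same computation gives expected intra-class fraction $\frac{D_2}{2|E|} = \frac{l}{1+l}$, and the correction is again $n_2/n = \tfrac12$; since $\frac{l}{1+l} > \tfrac12$, this term survives the $[\cdot]_+$. With $C=2$ the prefactor $\frac{1}{C-1}$ is $1$, so $h_{class}$ equals (in expectation, and up to $o(1)$ errors) $\frac{l}{l+1} - \frac12$, which tends to $\tfrac12$ as $l\to\infty$. The only care needed is that the statement of the proposition gives an exact value $\E h_{class} = \frac{l}{l+1}-\frac12$ rather than an asymptotic one; so I would either (i) present the computation as the leading-order term and note that the exact finite-$n$ expectation may differ by lower-order terms depending on the precise matching convention, or (ii) redo the endpoint-matching expectation exactly, using that a given class-$2$ endpoint matches another class-$2$ endpoint with probability exactly $\frac{D_2 - 1}{2|E| - 1}$, sum over the $D_2$ endpoints, divide by $D_2$, and subtract $\tfrac12$; one then checks that the resulting expression equals $\frac{l}{l+1}-\frac12$ (possibly after absorbing the self-loop/multi-edge handling into the convention, consistent with how the paper treats the configuration model in Appendix~\ref{app:configuration_model}).

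The main obstacle is bookkeeping, not mathematics: making the exact expectation match the clean closed form $\frac{l}{l+1}-\frac12$ requires being precise about whether we count endpoint-pairs or edges, how the $[\cdot]_+$ interacts with the linearity of expectation (it does not commute with $\E$, so strictly one should argue concentration of the class-$1$ term below $\tfrac12$ and of the class-$2$ term above $\tfrac12$, then the $[\cdot]_+$ becomes deterministic inside the event of probability $1-o(1)$, and $\E h_{class}$ is controlled on that event), and how self-loops are treated. The honest route is: (1) show each of the two intra-class fractions concentrates around $\frac{1}{1+l}$ and $\frac{l}{1+l}$ respectively, e.g.\ by a second-moment or Azuma-type estimate on the matching; (2) conclude that with probability $1-o(1)$ the $[\cdot]_+$ zeroes the class-$1$ term and is the identity on the class-$2$ term; (3) on that event $h_{class} = \frac{l}{l+1}-\frac12 + o(1)$, and since $h_{class}\in[0,1]$ the complementary event contributes $o(1)$ to the expectation, giving $\E h_{class} = \frac{l}{l+1}-\frac12 + o(1)\to\tfrac12$. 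If an exact identity is wanted, one fixes a matching convention and carries the $-1$'s through as in option (ii) above.
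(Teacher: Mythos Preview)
Your proposal is correct and follows essentially the same approach as the paper: compute the expected intra-class edge fraction for each class under the configuration model ($\tfrac{1}{l+1}$ and $\tfrac{l}{l+1}$ respectively), apply the $[\cdot]_+$ operation, and sum with the prefactor $\tfrac{1}{C-1}=1$. The paper's own proof is actually less careful than yours---it takes the matching probability to be exactly $D_k/(2|E|)$ and applies $[\cdot]_+$ directly to the expected fractions without the concentration argument you sketch, so your added bookkeeping is more rigorous than what appears in the paper.
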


\begin{proof}
We need to compute the expected number of intra-class edges for each class. For the first class, we multiply the number of nodes by the degree of each node and by the probability of a particular edge to go to the same class:
$
\frac{n}{2} \cdot d \cdot \frac{nd}{n(l+1)d} = \frac{n d}{2 (l+1)}
$.
Normalizing by the sum of the degrees, we get $\frac{1}{(l+1)}$. Similarly, for the second class, the normalized number of the intra-class edges is $\frac{l}{(l+1)}$. Hence,
\[
\E h_{class} = \left[\frac{1}{l+1} - \frac{1}{2}\right]_+ + \left[\frac{l}{l+1} - \frac{1}{2}\right]_+ = \frac{l}{l+1} - \frac{1}{2}.
\]
\end{proof}
This proposition shows that class homophily does not satisfy the constant baseline property.

As discussed in Section~\ref{sec:properties-existing}, if we remove the $[\cdot]_+$ operation from the definition of class homophily, we obtain a measure that does satisfy the asymptotic constant baseline. Following Section~\ref{sec:edge-wise-measures}, we call this measure \emph{balanced adjusted homophily} as it adjusts balance homophily for chance:
\begin{equation*}
h_{bal}^{adj} = \frac{1}{C-1}\sum_{k=1}^{C} \left(\frac{\sum\limits_{v:y_v=k} |\{u \in N(v): y_u = y_v \}|}{\sum_{v:y_v=k} d(v)} - \frac{n_k}{n} \right) = \frac{C h_{bal} - 1}{C-1} \,.
\end{equation*}

\begin{proposition}
Assuming that there are $C$ classes and the graph is generated according to the model with given expected degrees, we have $\E h_{bal}^{adj} = 0$.
\end{proposition}

\begin{proof}

Let us denote the sizes of the classes by $n_1, \ldots, n_C$. 
Recall that we use the notation
$
D_k = \sum_{v: y_v = k} d(v)
$.
It is easy to see that
\[
\E h_{bal}^{adj} = \frac{1}{C-1} \sum_{i=1}^C \left( \frac{D_i}{\sum_j D_j} - \frac{n_i}{\sum_j n_j} \right) = \frac{1-1}{C-1} = 0 \,.
\]
\end{proof}

%, but does not satisfy minimal agreement since the value $-\frac{1}{C-1}$ depends on the number of classes. Similarly to $h_{bal}$, motonicity and emtpy class tolerance are not satisfied.

\subsection{Constant baseline}\label{app:constant-baseline}

There are several possible ways to formalize the constant baseline property. In the literature~\cite{gosgens2021good,gosgens2021systematic}, this property is often formalized as follows: assuming some randomized model, the expected value of a measure should be equal to some constant. For homophily measures, this corresponds to the following definition.

\begin{definition}
A homophily measure $h$ has \emph{constant baseline} if for $G$ generated according to the configuration model we have $\E h(G) = c_{base}$ for some constant $c_{base}$.
\end{definition}

This property is very strict as minor variations in the definition of the model (e.g., different alternatives of the configuration model discussed in Appendix~\ref{app:configuration_model}) may lead to negligibly small error terms preventing us from getting exactly the same constant $c_{base}$. Thus, in the main text, we use the alternative Definition~\ref{def:as_constant_baseline}. This definition is weaker in the sense that we allow asymptotically negligible deviations from the constant $c_{base}$. On the other hand, our definition is somewhat stronger since we also require the concentration of the value around its expectation. In that sense, our definition is stronger than the \emph{asymptotic} constant baseline defined in~\cite{gosgens2021good}.

\subsection{Monotonicity property}\label{app:monotonicity}

In this section, we discuss how one can define monotonicity for homophily measures.

First, let us revisit how monotonicity is defined for measures used in other areas. In general, the monotonicity property is defined as follows: if an object $B'$ is definitely more similar to an object $A$ than $B$, then the similarity between $B'$ and $A$ should be larger than between $B$ and $A$. The only problem is to define what it means to be ``definitely more similar''. In~\cite{gosgens2021systematic}, the monotonicity property was introduced for cluster similarity measures. Such measures evaluate how close two partitions of the same set of items are. To formally define the concept of more similar partitions, \citet{gosgens2021systematic} use the concept of perfect splits and perfect merges~--- such transformations of a partition $B$ that make it more similar to $A$. Later, monotonicity was defined for classification evaluation measures that compare the predicted labels for a set of items with the reference ones~\cite{gosgens2021good}. Here it is easier to formalize what it means that a labeling $B'$ is definitely more similar to the reference labeling than $B$. Indeed, if we consider a labeling $B$ and change one incorrect label to the correct one, then it becomes closer to the reference labeling.

Now, let us return to homophily measures. Such measures evaluate the agreement between \emph{a graph} and \emph{a labeling} of its nodes. To define monotonicity, we need such transformations that make a graph and a labeling definitely more similar. Since the problem is not symmetric, we can either rewire edges or relabel nodes. Regarding relabeling of nodes, we may say that taking a perfectly heterophilous node (that is not connected to any node of its class) and relabeling it in such a way that it becomes perfectly homophilous (which is only possible when all its neighbors belong to one class) should make the graph and labeling more similar to each other. Regarding the edges, we may say that taking an edge that connects two nodes of different classes and rewiring it to connect two nodes of the same class should increase the measure. However, rewiring edges may also affect the graph structure in a non-trivial way (degree distribution, diameter, etc.).

In this paper, we follow the edge rewiring approach, and thus our definition aligns well with edge-wise homophily measures and with previous works on classification and clustering performance measures~\cite{gosgens2021good,gosgens2021systematic}. 

If we consider edge-wise measures (that can be defined in terms of the class adjacency matrix) and directly follow the definition of monotonicity in~\cite{gosgens2021good,gosgens2021systematic}, our definition would be as follows: a homophily measure is \emph{monotone} if it increases when we decrement $c_{ij}$ and $c_{ji}$ by one and increment either $c_{ii}$ or $c_{jj}$ by two for $i \neq j$. This condition corresponds to taking an edge with $y_u \neq y_v$ and changing $y_u$ to $y_v$ or vice versa. 

However, an important property that has to be taken into account when discussing monotonicity for homophily measures is the fact that such measures are used to compare different graph datasets, in contrast to classification evaluation measures that compare the agreement between predicted labelings with a \emph{fixed} given reference labeling. This means that for classification measures, monotonicity and constant baseline are critical for fixed numbers of classes and elements. In contrast, homophily measures have to be comparable across datasets with different sizes and numbers of classes. For instance, balanced accuracy has a constant baseline with the expected value of $\frac{1}{C}$, which is sufficient for classification evaluation but is a drawback for measuring homophily. Because of this, we do not consider the standard definition of monotonicity as it is restricted to a fixed number of classes only. Instead, we introduce tolerance to empty classes, and based on that, we introduce a stronger version of monotonicity where we allow separate edge additions or deletions. In fact, our definition of monotonicity (Definition~\ref{def:monotonicity}) is similar to \emph{strong monotonicity} in~\cite{gosgens2021good,gosgens2021systematic}, but also requires empty class tolerance since comparing measures across different numbers of classes is crucial.

\subsection{Proof of Theorem~\ref{prop:adjusted-homophily}}\label{app:adjusted-homophily}

The fact that adjusted homophily is empty class tolerant directly follows from its definition: an empty class does not contribute to the numerator or denominator of $h_{adj}$.

Maximal agreement is also straightforward: we have $h_{edge} - \sum_{k=1}^C D_k^2/(2 |E|)^2 \le 1 - \sum_{k=1}^C D_k^2/(2 |E|)^2$ with equality if and only if all edges are homophilous.

Minimal agreement is not satisfied since the value $\frac{- \sum_{k=1}^C D_k^2/(2 |E|)^2}{1 - \sum_{k=1}^C D_k^2/(2 |E|)^2}$ can be different for different datasets. Thus, perfectly heterophilous datasets may get different values, which also causes some monotonicity violations for small values of $h_{adj}$.

\paragraph{Asymptotic constant baseline of adjusted homophily}

Now, let us formulate and prove the constant baseline property.

\begin{proposition}
Let $G$ be a graph generated according to the configuration model. Assume that the degree-weighted distribution of classes is not very unbalanced, i.e., that $1 - \sum_k \bar{p}(k)^2 \gg 1/ \sqrt{E}$. Then, $|h_{adj}| \le \phi$ with probability $1 - o(1)$ for some $\phi = \phi(|E|) \to 0$ as $|E| \to \infty$.
\end{proposition}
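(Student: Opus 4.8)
The plan is to show that both the numerator and the denominator of $h_{adj}$ concentrate around their expectations, with the denominator bounded away from zero by the balance assumption, so the ratio is small. Write $h_{adj} = \frac{h_{edge} - S}{1 - S}$ where $S = \sum_k \bar{p}(k)^2$. The key observation is that $S$ is a deterministic function of the degree-weighted class distribution $\bar{p}(\cdot)$, which in the configuration model depends only on the degree sequence and the labels, \emph{not} on the random matching of stubs. Hence $S$ and the denominator $1 - S$ are non-random given the inputs, and by assumption $1 - S \gg 1/\sqrt{|E|}$. So the entire problem reduces to controlling the numerator, i.e.\ showing $|h_{edge} - S|$ is small with high probability.

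First I would compute $\E\, h_{edge}$ under the configuration model. Recall $h_{edge} = \frac{1}{2|E|}\sum_{(u,v)} \1{\{y_u = y_v\}}$ counting ordered pairs. Pairing each of the $2|E|$ stubs uniformly, the probability that a fixed stub at a node of class $k$ is matched to another class-$k$ stub is $\frac{D_k - 1}{2|E| - 1}$ (one stub removed). Summing, $\E\, h_{edge} = \frac{1}{2|E|}\sum_k D_k \cdot \frac{D_k - 1}{2|E|-1} = \sum_k \bar{p}(k)^2 + O(1/|E|) = S + O(1/|E|)$, using $\sum_k D_k = 2|E|$. So the bias is $O(1/|E|)$, negligible compared to the target threshold $1/\sqrt{|E|}$.

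Second, I would bound the fluctuations of $h_{edge}$ around its mean. The cleanest route is the Azuma--Hoeffding inequality applied to the edge-revealing martingale: expose the partner of stub $1$, then stub $2$, and so on; each such exposure changes the number of homophilous edges by at most an $O(1)$ amount (adding one pairing can destroy at most a bounded contribution and create one), so after dividing by $2|E|$ the bounded-difference constant is $O(1/|E|)$ over $O(|E|)$ steps, giving $\P(|h_{edge} - \E\, h_{edge}| > t) \le 2\exp(-c\, |E|\, t^2)$. Choosing $t = |E|^{-1/4}$, say, makes this probability $o(1)$ while $t \to 0$. (One can alternatively compute the variance directly via the second moment of the number of monochromatic pairings in a uniform matching, which is a standard $O(1/|E|)$ bound, but the martingale argument avoids the combinatorial bookkeeping of overlapping stub-pairs.)

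Putting these together: with probability $1 - o(1)$, $|h_{edge} - S| \le |h_{edge} - \E\, h_{edge}| + |\E\, h_{edge} - S| = O(|E|^{-1/4})$, and since $1 - S \gg |E|^{-1/2}$ we get $|h_{adj}| = \frac{|h_{edge} - S|}{1 - S} = O\!\left(\frac{|E|^{-1/4}}{|E|^{-1/2}}\right) = O(|E|^{-1/4}) \to 0$, so we may take $\phi(|E|) = C |E|^{-1/4}$ for a suitable constant. (The exponent can be tuned by optimizing the choice of $t$ against the exact gap in the balance assumption; any $\phi \to 0$ suffices for the statement.) The main obstacle is the concentration step done carefully for the \emph{configuration model} rather than the easier expected-degree model --- one must make sure the martingale differences are genuinely $O(1)$ in the unnormalized count and that the removal-of-one-stub corrections (the $-1$ terms above) are tracked, since these are exactly the $O(1/|E|)$ error terms the appendix warns must be handled; everything else is routine.
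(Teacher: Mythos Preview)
Your overall strategy matches the paper's: both reduce to showing the numerator $h_{edge}-S$ is small (since $S$ is deterministic given the degree sequence and labels) and then divide by the assumed-large denominator $1-S$. The paper computes $\E h_{mod}=O(1/|E|)$ exactly as you do, then bounds $\mathrm{Var}(\xi)=O(|E|)$ by a direct second-moment calculation over pairs of stubs and applies Chebyshev. Your use of an Azuma/bounded-differences argument on the edge-exposure martingale is a legitimate alternative that sidesteps the second-moment combinatorics and in fact yields exponential rather than polynomial tails; for random perfect matchings the $O(1)$ Lipschitz property of the monochromatic-edge count under a single swap is standard, so the concentration step is sound.

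However, your final arithmetic is wrong, and with it your specific choice of $t$. You write
\[
|h_{adj}| = O\!\left(\frac{|E|^{-1/4}}{|E|^{-1/2}}\right) = O(|E|^{-1/4}),
\]
but $|E|^{-1/4}/|E|^{-1/2}=|E|^{+1/4}\to\infty$. The assumption only gives $1-S\gg |E|^{-1/2}$, so $1-S$ could be, say, $(\log|E|)/\sqrt{|E|}$, and then $t=|E|^{-1/4}$ is much larger than the denominator and the bound blows up. The fix is exactly what your parenthetical hints at but does not carry out: choose $t$ adaptively between $|E|^{-1/2}$ and $1-S$. Since $1-S\gg|E|^{-1/2}$, there exists $t$ with $|E|^{-1/2}\ll t\ll 1-S$ (e.g.\ $t=\sqrt{|E|^{-1/2}(1-S)}$), which makes $|E|t^2\to\infty$ so the Azuma probability is $o(1)$, while $t/(1-S)\to 0$ gives $|h_{adj}|\to 0$. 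The paper handles this the same way after Chebyshev: it takes any slowly growing $\varphi\to\infty$ with $\varphi/\sqrt{|E|}\ll 1-S$ and sets $\phi=\varphi/\big(\sqrt{|E|}\,(1-S)\big)$. So your plan is correct once you repair the choice of threshold; the displayed conclusion as written does not follow.
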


\begin{proof}

Let us first analyze the numerator of $h_{adj}$ which we denote by $h_{mod}$ (since it corresponds to the network's \emph{modularity}~--- see Appendix~\ref{app:modularity}).

Let $\xi$ denote the number of intra-class edges in a graph constructed according to the configuration model. We may write:
\[
\xi = \frac{1}{2}\sum_{k=1}^C \sum_{i=1}^{D_k} \1_{i,k}\,,
\]
where $\1_{i,k}$ indicates that an endpoint $i$ of some node with class $k$ is connected to another endpoint of class $k$. Note that the probability of this event is $\frac{D_k-1}{2|E|-1}$. Thus, we have
\begin{equation}\label{eq:e_h_mod}
\E h_{mod} = \frac{\xi}{|E|} - \sum_{k=1}^C \frac{D_k^2}{4|E|^2}
=
\frac{1}{|E|} \sum_{k=1}^C\frac{ D_k (D_k - 1)}{2(2|E|-1)} - \sum_{k=1}^C \frac{D_k^2}{4|E|^2} = O\left( \frac{1}{|E|} \right)\,.
\end{equation}

Now, let us estimate the variance of the number of intra-class edges. We may write:

\begin{multline*}
\mathrm{Var} (2\xi) 
= \E (2\xi)^2 - (\E (2 \xi))^2
= \E \left( \sum_{k=1}^C \sum_{i=1}^{D_k} \1_{i,k} \right)^2 - (\E 2\xi)^2 \\ 
= \E (2 \xi)  
+ \sum_{k=1}^C 2 \cdot \binom{D_k}{2}\cdot\left( \frac{1}{2|E| -1} + \frac{D_k-2}{2|E|-2} {\cdot} \frac{D_k - 3}{2|E| - 3}\right) \\
+ \sum_{k=1}^C \sum_{l = k+1}^C 2 D_k D_l \cdot \frac{D_k-1}{{2|E|-1}} \cdot \frac{D_l-1}{{2|E|-3}} 
- (\E 2\xi)^2 \\
=\E (2 \xi)  
+ \sum_{k=1}^C \frac{D_k^4 + O(D_k^3)}{4|E|^2} \\
+ \sum_{k=1}^C \sum_{l = k+1}^C \frac{2 D_k^2 D_l^2 + O(D_k^2 D_l + D_k D_l^2)}{4|E|^2} 
- \left( \sum_{k=1}^C \frac{D_k^2 + O(D_k)}{2|E|} \right)^2 \\
=\E (2 \xi)  
+ \sum_{k=1}^C \frac{O(D_k^3)}{4|E|^2} 
+ \sum_{k=1}^C \sum_{l = k+1}^C \frac{O(D_k^2 D_l + D_k D_l^2)}{4|E|^2} \\
%= \E (2 \xi)  + \sum_{k=1}^C \frac{D_k^2(D_k - 1)^2}{2|E|(2|E|-1)} + \sum_{k=1}^C \sum_{l = k+1}^C \frac{2 D_k^2 D_l^2}{{2|E|(2|E|-1)}} - (\E \xi)^2 \\
%= \E (2 \xi) + \frac{\left(\sum_{k=1}^C D_k^2\right)^2}{2|E|(2|E|-1)} - \sum_{k=1}^C \frac{2D_k^3 - D_k^2}{2|E|(2|E|-1)} -  \frac{\left(\sum_{k=1}^C D_k^2\right)^2}{4|E|^2} \\
= \E (2 \xi) + O(|E|) %= \frac{\sum_{k=1}^C D_k^2}{2|E|} + O(|E|) 
= O(|E|)\,.
\end{multline*}
Let $\varphi = \varphi(|E|)$ be any function such that $\varphi \to \infty$ as $|E| \to \infty$. Using the Chebyshev's inequality and~\eqref{eq:e_h_mod}, we get:
%\[
%\P (|\xi - \E\xi| > |E| \varepsilon) \le \frac{\mathrm{Var}(\xi)}{|E|^2 \varepsilon^2} = O\left( \frac{1}{|E|\varepsilon^2} \right)\,.
%\]
%Thus, $\P (|\xi - \E\xi| > |E| \varepsilon) \to 0$ if $\varepsilon \gg 1/\sqrt{|E|}$. 
%From this and~\eqref{eq:e_h_mod} we get:
\begin{multline*}
\P\left(|h_{mod}| \ge \frac{\varphi}{\sqrt{E}}\right)
= \P\left(|h_{mod} - \E h_{mod}| > \frac{\varphi}{\sqrt{E}} + O\left(\frac{1}{|E|}\right)\right) \\ = O\left(\frac{\mathrm{Var}(\xi) |E|}{|E|^2 \varphi^2}\right) = O\left( \frac{1}{\varphi^2} \right) = o(1).
\end{multline*}

Recall that $h_{adj} = \frac{h_{mod}}{1 - \sum_{k=1}^C \bar{p}(k)^2}$. Since we have $1 - \sum_{k=1}^C \bar{p}(k)^2 \gg 1/\sqrt{E}$ and $|h_{mod}| < \frac{\varphi}{\sqrt{E}}$ with probability $1-o(1)$, we can choose such slowly growing $\varphi$ that $h_{adj} < \frac{\varphi}{\sqrt{E}\left(1 - \sum_{k=1}^C \bar{p}(k)^2\right)} = o(1)$ with probability $1 - o(1)$.

\end{proof}

\paragraph{Monotonicity of adjusted homophily}

Finally, let us analyze the monotonicity of adjusted homophily and finish the proof of Theorem~\ref{prop:adjusted-homophily}.

Recall that a homophily measure is \emph{monotone} if it is empty class tolerant, and it increases when we increment a diagonal element by 2 (except for perfectly homophilous graphs) and decreases when we increase $c_{ij}$ and $c_{ji}$ by one for $i \neq j$ (except for perfectly heterophilous graphs).

Empty class tolerance is clearly satisfied for adjusted homophily, so let us now analyze what happens when we increment a diagonal element or two (symmetric) off-diagonal elements.

Let us denote $N := 2|E|$, $a_i := \sum_j c_{ij}$. Then, we have $\bar{p}(i) = \frac{a_i}{N}$. Thus, we have to prove that the measure is monotone when $h_{adj} > \frac{\sum_i a_i^2}{\left(\sum_i a_i^2 + n^2 \right)}$.

Using the notation with a class adjacency matrix, adjusted homophily can be written as follows:
\[
h_{adj} = \frac{N \sum_{i} c_{ii} - \sum_i a_i^2}{N^2 - \sum_i a_i^2}\,.
\]

To check whether the measure increases when we increment diagonal elements, let us compute the derivative w.r.t.~$c_{kk}$ for some $k$:
\begin{equation*}
\frac{\partial h_{adj}}{\partial c_{kk}}
= \frac{
\left(\sum_i c_{ii} + N - 2 a_k \right) 
\left( N^2 - \sum_i a_i^2 \right)}{\left( N^2 - \sum_i a_i^2 \right)^2}
- \frac{\left( 2 N - 2 a_k \right) 
\left( N \sum_{i} c_{ii} - \sum_i a_i^2 \right)}{\left( N^2 - \sum_i a_i^2 \right)^2}
\,.
\end{equation*}
Let us simplify the numerator:
\begin{multline*}
\left(\sum_i c_{ii} + N - 2 a_k \right) 
\left( N^2 - \sum_i a_i^2 \right) - 
\left( 2 N - 2 a_k \right) 
\left( N \sum_{i} c_{ii} - \sum_i a_i^2 \right) \\
= N^3 - 2 N^2 a_k 
- \sum_i c_{ii} \sum_i a_i^2
- N^2 \sum_{i} c_{ii} + 2 a_k N \sum_{i} c_{ii}
+ N \sum_i a_i^2.
\end{multline*}
For monotonicity, we need the derivative to be positive, i.e.,
\begin{equation*}
N^3  + 2 a_k N \sum_{i} c_{ii} + N \sum_i a_i^2 > 2 N^2 a_k + \sum_i c_{ii} \sum_i a_i^2 + N^2 \sum_{i} c_{ii}\,.
\end{equation*}
Let us denote by $\bar D$ the sum of all off-diagonal elements, i.e., $\bar D := N - \sum_i c_{ii}$. Then, we can rewrite the above condition as follows:
\[
N^2 \bar D  + \bar D \sum_i a_i^2 > 2 N a_k \bar D\,,
\]
\[
N^2 + \sum_i a_i^2 > 2 N a_k\,.
\]
The latter equality holds since
\[
N^2 + \sum_i a_i^2 > N^2 + a_k^2 \ge 2 N a_k\,.
\]
Thus, $h_{adj}$ increases when we increment a diagonal element.

To see whether the measure decreases when we increment the off-diagonal elements, let us compute the derivative of the measure w.r.t.~$c_{kl}$ (which is equal to $c_{lk}$) for $k\neq l$:
\begin{equation*}
\frac{\partial h_{adj}}{\partial c_{kl}} = \frac{
\left(2 \sum_i c_{ii} - 2 a_k - 2 a_l \right) 
\left( N^2 - \sum_i a_i^2 \right)}{\left( N^2 - \sum_i a_i^2 \right)^2} - \frac{
\left( 4 N - 2 a_k - 2 a_l \right) 
\left( N \sum_{i} c_{ii} - \sum_i a_i^2 \right)}{\left( N^2 - \sum_i a_i^2 \right)^2}\,.
\end{equation*}
The numerator is:
\begin{multline*}
\left(2 \sum_i c_{ii} - 2 a_k - 2 a_l \right) 
\left( N^2 - \sum_i a_i^2 \right)
- \left( 4 N - 2 a_k - 2 a_l \right) 
\left( N \sum_{i} c_{ii} - \sum_i a_i^2 \right) \\
= - 2 (a_k + a_l) N^2
- 2 \sum_i c_{ii} \sum_i a_i^2 
- 2 N^2 \sum_{i} c_{ii}
+ 2 (a_k + a_l) N \sum_{i} c_{ii} + 4 N \sum_i a_i^2 \,.
\end{multline*}
For monotonicity, we need the following inequality:
\begin{equation}\label{eq:not-monotone}
(a_k + a_l) N^2 + \sum_i c_{ii} \sum_i a_i^2 + N^2 \sum_{i} c_{ii} \\
> (a_k + a_l) N \sum_{i} c_{ii} + 2 N \sum_i a_i^2\,.
\end{equation}
%\begin{equation}\label{eq:not-monotone}
%(a_k + a_l) N \bar D + N^2 \sum_{i} c_{ii} > N \sum_i a_i^2 + \bar D \sum_i a_i^2\,.
%\end{equation}
This inequality is not always satisfied. Indeed, both $a_k + a_l$ and $\sum_i c_{ii}$ can be small. Let us note, however, that the inequality is satisfied if $h_{adj}$ is large enough. Indeed, the sufficient condition for~\eqref{eq:not-monotone} is
\[
\sum_i c_{ii} \left(\sum_i a_i^2 + N^2 \right) > 2 N \sum_i a_i^2\,.
\]
This is equivalent to the following inequality for $h_{adj}$:
\begin{multline*}
h_{adj} = \frac{N \sum_{i} c_{ii} - \sum_i a_i^2}{N^2 - \sum_i a_i^2}
> \frac{N \sum_{i} \frac{2 N \sum_i a_i^2}{\left(\sum_i a_i^2 + N^2 \right)} - \sum_i a_i^2}{N^2 - \sum_i a_i^2} \\ = \frac{N^2 \sum_i a_i^2 - \left(\sum_i a_i^2\right)^2 }{\left(N^2 - \sum_i a_i^2 \right)\left(\sum_i a_i^2 + N^2 \right)}
= \frac{\sum_i a_i^2}{\left(\sum_i a_i^2 + N^2 \right)}\,.
\end{multline*}
Thus, $h_{adj}$ is monotone if its values are at least $\frac{\sum_i a_i^2}{\left(\sum_i a_i^2 + N^2 \right)}$. Note that we have $\frac{\sum_i a_i^2}{\left(\sum_i a_i^2 + N^2 \right)} < 0.5$, so when $h_{adj} > 0.5$ it is always monotone.

However, non-monotone behavior may occur when the numerator of $h_{adj}$ is small. This undesirable behavior is somewhat expected: due to the normalization, $h_{adj}$ satisfies constant baseline and maximal agreement but violates minimal agreement. Since minimal agreement does not hold, one can expect monotonicity to be violated for small values of $h_{adj}$.

For instance, we can construct the following counter-example to monotonicity. Assume that we have four classes ($0,1,2,3$) and non-zero entries of the class adjacency matrix are $c_{23} = c_{32} =  M$, $c_{33} = 2$. Then, the adjusted homophily is:
\begin{equation*}
h_{adj} = \frac{2(2M+2) - M^2 - (M+2)^2}{(2M+2)^2 - M^2 - (M+2)^2} 
= \frac{- 2 M^2}{2 M^2 + 4 M} = \frac{-M}{M + 2}\,.
\end{equation*}
Now, we increment the entries $c_{01}$ and $c_{10}$ by 1. The new adjusted homophily is:
\begin{equation*}
h_{adj}' = \frac{2(2M+4) - M^2 - (M+2)^2 - 2}{(2M+4)^2 - M^2 - (M+2)^2 - 2} \\
= \frac{- 2M^2 + 2}{2M^2 + 12 M + 10} = \frac{1 - M}{M + 5}\,.
\end{equation*}
We disprove monotonicity if we have $h_{adj}' > h_{adj}$, i.e., 
\begin{align*}
\frac{1 - M}{M + 5} &> \frac{-M}{M + 2}\,, \\
2 - M - M^2 &> - M^2 - 5 M\,, \\
2 &> - 4 M\,,
\end{align*}
which holds for all $M \ge 1$.

\subsection{How homophily and modularity are related}\label{app:modularity}

\paragraph{Modularity}

\textit{Modularity} is arguably the most well-known measure of goodness of a partition for a graph. It was first introduced in~\cite{newman2004finding} and is widely used in community detection literature: modularity is directly optimized by some algorithms, used as a stopping criterion in iterative methods, or used as a metric to compare different algorithms when no ground truth partition is available. The basic idea is to consider the fraction of intra-community edges among all edges of $G$ and penalize it for avoiding trivial partitions like those consisting of only one community of size $n$. In its general form and using the notation adopted in this paper, modularity is
$$
\frac{1}{|E|} \left(|\{\{u,v\} \in E: y_u = y_v\}| - \gamma \E \xi \right)\,,
$$
where $\xi$ is a random number of intra-class edges in a graph constructed according to some underlying random graph model; $\gamma$ is the \textit{resolution parameter} which allows for varying the number of communities obtained after maximizing modularity. The standard choice is $\gamma = 1$, which also guarantees that the expected value of modularity is 0 if a graph is generated independently of class labels according to the underlying model.

Usually, modularity assumes the configuration model. In this case, we have $\E \xi = \frac{\sum_{k} D_k (D_k - 1)}{2(2|E|-1)} \approx \frac{1}{4|E|} \sum_{k} D_k^2$, giving the following expression:
\begin{equation*}
h_{mod} = \frac{|\{\{u,v\} \in E: y_u = y_v\}|}{|E|} - \sum_{k=1}^C \frac{D_k^2}{4|E|^2}\,.
\end{equation*}
We refer to \citet{newman2016equivalence,prokhorenkova2019community} for more details regarding modularity and its usage in community detection literature.

\paragraph{Relation to homophily}

While modularity measures how well a partition fits a given graph, homophily measures how well graph edges agree with the partition (class labels). Thus, they essentially measure the same thing, and modularity can be used as a homophily measure. Indeed, it is easy to see that modularity coincides with the numerator of adjusted homophily, see~\eqref{eq:adjusted-homophily}. Hence, adjusted homophily can be viewed as a normalized version of modularity. Note that for modularity, normalization is not crucial as modularity is usually used to compare several partitions of the same graph. In contrast, homophily is typically used to compare different graphs, which is why normalization is essential.

\section{Analysis of label informativeness}\label{app:LI}

\subsection{Proof of Theorem~\ref{prop:LI}}\label{app:LI-proof}

In this section, we prove Theorem~\ref{prop:LI}. Let us give a formal statement of this theorem.
\stepcounter{theorem-app}
\begin{theorem-app}
Assume that $|E| \to \infty$ as $n \to \infty$. Assume that the entropy of $\bar{p}(\cdot)$ is bounded from below by some constant. Let $\bar{p}_{min} = \min_k \bar{p}(k)$. Assume that $\bar{p}_{min} \gg C/\sqrt{|E|}$ as $n \to \infty$. Let $\varepsilon = \varepsilon(n)$ be any function such that $\varepsilon \gg \frac{C}{{p}_{min}\sqrt{|E|}}$ and $\varepsilon = o(1)$ as $n \to \infty$. Then, with probability $1-o(1)$, we have $|\mathrm{LI}| \le K\varepsilon = o(1)$ for some constant $K>0$.
\end{theorem-app}

\begin{proof}

Recall that
\[
\mathrm{LI} = - \frac{\sum_{c_1,c_2} p(c_1,c_2) \log \frac{p(c_1,c_2)}{\bar{p}(c_1) \bar{p}(c_2)}}{\sum_c \bar{p}(c) \log \bar{p}(c)} \,.
\]

Thus, for each pair $k,l$, we need to estimated $p(k,l)$. Let us denote:
\[
E(k,l) := \sum_{u=1}^n \sum_{v = 1}^n \1{\{\{u,v\} \in E, y_u = k, y_v = l\}}\,,
\]
so we have $p(k,l) =\frac{E(k,l)}{2|E|}$. 

As before, we use the notation $D_k = \sum_{v:y_v = k} d(v)$. First, consider the case $k \neq l$. Let us compute the expectation of $E(k,l)$:
\begin{align*}
&\E E(k,l) = \frac{D_k D_l}{2|E|-1} \text{ for $k \neq l$}, \\ 
&\E E(k,k) = \frac{D_k (D_k-1)}{2|E|-1}\,.
\end{align*}
Now, we estimate the variance. Below, $\1_{i,j}$ indicates that two endpoints are connected.
\begin{multline*}
\mathrm{Var} E(k,l) = \E \left( \sum_{i=1}^{D_k} \sum_{j=1}^{D_l} \1_{i,j} \right)^2 - (\E E(k,l))^2
\\
= \E E(k,l) + \frac{D_k D_l (D_k-1) (D_l-1)}{(2|E|-1)(2|E| - 3)} - (\E E(k,l))^2 \\
= \E E(k,l) + O\left( \frac{D_k D_l (D_k + D_l)}{|E|^2} \right)
= O\left( \frac{D_k D_l}{|E|} \right)\,.
\end{multline*}
Similarly,
\begin{equation*}
\mathrm{Var} E(k,k) = \E \left( \sum_{i=1}^{D_k} \sum_{j=1}^{D_k} \1_{i,j} \right)^2 - (\E E(k,k))^2
= \E E(k,k) + O\left( \frac{D_k^3}{|E|^2} \right) 
= O\left( \frac{D_k^3}{|E|} \right)\,.
\end{equation*}
From Chebyshev's inequality, we get:
\begin{equation*}
\P \left(|E(k,l) - \E E(k,l)| > \varepsilon \E E(k,l) \right)
= O\left( \frac{|E|}{D_k D_l \varepsilon^2} \right) = O\left( \frac{1}{\bar{p}_{min}^2 |E| \varepsilon^2 } \right) = o\left( \frac{1}{C^2}\right)\,.
\end{equation*}
Thus, with probability $1-o(1)$, $\P \left(|E(k,l) - \E E(k,l)| < \varepsilon \E E(k,l) \right)$ for all pairs of classes. In this case,
\[
\mathrm{LI} = - \frac{\sum_{k,l} p(k,l) \log (1+O(\varepsilon))}{\sum_l \bar{p}(k) \log \bar{p}(k)} = O(\varepsilon)\,.
\]

\end{proof}

\subsection{Alternative definition}\label{app:LI-def}

Recall that in Section~\ref{sec:LI} we define the label informativeness in the following general form: $\mathrm{LI} := I(y_\xi,y_\eta)/H(y_\xi)$. Then, to define $\LI_{edge}$, we say that $\xi$ and $\eta$ are two endpoints of an edge sampled uniformly at random. Another possible variant is when we first sample a random node and then sample its random neighbor. The probability of an edge becomes
\[
\bar{p}(c_1,c_2) = \sum_{(u,v)\in E} \frac{\1{\{y_u = c_1, y_v = c_2\}}}{n\,d(u)}.
\]
In this case, $H(y_\xi)$ is the entropy of the distribution $p(c)$, $H(y_\eta)$ is the entropy of $\bar{p}(c)$. Thus, we obtain:
\[
\mathrm{LI}_{node} = - \frac{\sum_{c_1,c_2} \bar{p}(c_1,c_2) \log \frac{\bar{p}(c_1,c_2)}{p(c_1) \bar{p}(c_2)}}{\sum_c p(c) \log p(c)} \,.
\]

In this paper, we mainly focus on $\mathrm{LI}_{edge}$ and refer to it as $\mathrm{LI}$ for brevity. First, this measure is conceptually similar to adjusted homophily discussed above: they both give equal weights to all edges. Second, in our analysis of real datasets, we do not notice a substantial difference between $\mathrm{LI}_{edge}$ and $\mathrm{LI}_{node}$ in most of the cases, see Table~\ref{tab:datasets}. However, these measures can potentially be different, especially for graphs with very unbalanced degree distributions (in $\mathrm{LI}_{node}$, averaging is over the nodes, so all nodes are weighted equally, while in $\mathrm{LI}_{edge}$, averaging is over the edges, so more weight is given to high-degree nodes). The choice between $\mathrm{LI}_{edge}$ and $\mathrm{LI}_{node}$ may depend on a particular application.

\section{Additional related work}

In the aspect of characterizing graph datasets, our work is conceptually similar to a recent paper by~\citet{liu2022taxonomy}. In this paper, the authors empirically analyze what aspects of a graph dataset (e.g., node features or graph structure) influence the GNN performance. \citet{liu2022taxonomy} follow a data-driven approach and measure the performance change caused by several data perturbations. In contrast, we follow a theoretical approach to choosing graph characteristics based only on the label-structure relation. Additionally, our characteristics are very simple, model agnostic, and can be used for general-purpose graph analysis (beyond graph ML). Similarly to \cite{liu2022taxonomy}, we believe that the proposed characteristics can help in the selection and development of diverse future graph benchmarks.

Similarly to our work, several papers note that homophily does not always reflect the simplicity of a dataset for GNNs, and standard GNNs can work well on some heterophilous graphs. To address this problem, \citet{luan2021heterophily} propose a metric called \emph{aggregation homophily} that takes into account both graph structure and input features. We note that the aggregation homophily is based on a particular aggregation scheme of the GCN model. In contrast, the proposed $\mathrm{LI}$  is a simple and intuitive model-agnostic measure. An additional advantage of $\mathrm{LI}$  is that it is provably unbiased and can be compared across datasets with different numbers of classes and class size balance. Another feature-based metric was very recently proposed by~\citet{luan2023graph}. It is called Kernel Performance Metric (KPM) and reflects how beneficial is a graph structure for prediction. \citet{ma2021homophily} also observe that some heterophilous graphs are easy for GNNs. To analyze this problem, they propose a measure named \emph{cross-class neighborhood similarity} defined for pairs of classes. While this measure is an informative tool to analyze a particular dataset, it does not give a single number to easily compare different datasets.

%a new , which is tested to be more effective than the existing homophily metrics on revealing the advantage and disadvantage of GNNs on synthetic and real-world datasets.

Finally, let us note that \citet{suresh2021breaking} use~\eqref{eq:adjusted-homophily} as a measure of graph assortativity (homophily). The authors show that the homophily level varies over the graph, and the prediction performance of GNNs correlates with the local homophily. They use these insights to transform the input graph and get an enhanced level of homophily. In future work, it would be interesting to see whether additional benefits can be obtained using label informativeness instead of homophily.

\section{Characterizing real graph datasets}\label{app:real-datasets}

\subsection{Datasets}

\texttt{Cora}, \texttt{citeseer}, and \texttt{pubmed}~\cite{giles1998citeseer, mccallum2000automating, sen2008collective, namata2012query, yang2016revisiting} are three classic paper citation network benchmarks. For \texttt{cora} and \texttt{citeseer} labels correspond to paper topics, while for \texttt{pubmed} labels specify the type of diabetes addressed in the paper. \texttt{Coauthor-cs} and \texttt{coauthor-physics}~\cite{shchur2018pitfalls} are co-authorship networks. Nodes represent authors, and two nodes are connected by an edge if the authors co-authored a paper. Node labels correspond to fields of study. \texttt{Amazon-computers} and \texttt{amazon-photo}~\cite{shchur2018pitfalls} are co-purchasing networks. Nodes represent products, and an edge means that two products are frequently bought together. Labels correspond to product categories. \texttt{Lastfm-asia} is a social network of music streaming site LastFM users who live in Asia~\cite{rozemberczki2020characteristic}. Edges represent follower relationships, and labels correspond to user's nationality. In the \texttt{facebook}~\cite{rozemberczki2021multi} graph nodes correspond to official Facebook pages, and links indicate mutual likes. Labels represent site categories. In the \texttt{github}~\cite{rozemberczki2021multi} graph, nodes represent GitHub users and edges represent follower relationships. A binary label indicates that a user is either a web or a machine learning developer. \texttt{Ogbn-arxiv} and \texttt{Ogbn-products}~\cite{hu2020open} are two datasets from the recently proposed Open Graph Benchmark. \texttt{Ogbn-arxiv} is a citation network graph with labels corresponding to subject areas. \texttt{ogbn-products} is a co-purchasing network with labels corresponding to product categories. 
%\texttt{Squirrel}, \texttt{chameleon}~\cite{rozemberczki2021multi}, and 
\texttt{Actor}~\cite{tang2009social,pei2020geom} is a popular dataset for node classification in heterophilous graphs. 
%We use the versions of these datasets suggested by~\cite{pei2020geom}. In \texttt{squirrel} and \texttt{chameleon} nodes represent Wikipedia pages, and edges indicate links between pages. The labels correspond to a page's popularity.\footnote{While the labels of \texttt{squirrel} and \texttt{chameleon} are naturally ordered, we treat them as unordered categories, as typically done in standard node classification. This may lead to different values of homophily for the original numerical targets and our categorical ones. Also, it has recently been observed that these datasets have some duplicated nodes \cite{heterophily-datasets}, which may affect graph characteristics and GNN performance.} 
The nodes correspond to actors and edges represent co-occurrence on the same Wikipedia page. The labels are based on words from an actor's Wikipedia page. \texttt{Flickr}~\cite{zeng2019graphsaint} is a graph of images with labels corresponding to image types. \texttt{Deezer-europe}~\cite{rozemberczki2020characteristic} is a user network of the music streaming service Deezer with labels representing a user's gender. \texttt{Twitch-de} and \texttt{twitch-pt}~\cite{rozemberczki2021multi} are social network graphs of gamers from the streaming service Twitch.\footnote{\texttt{Twitch-de} and \texttt{twitch-pt} are subgraphs of a larger dataset \texttt{twitch-gamers}. We report characteristics for all of them since they have different sizes, edge density, and may have different structural properties.} The labels indicate if a streamer uses explicit language. \texttt{Genius}~\cite{lim2021expertise}, \texttt{twitch-gamers}~\cite{rozemberczki2021twitch}, \texttt{arxiv-year}~\cite{hu2020open}, \texttt{snap-patents}~\cite{leskovec2005graphs}, and \texttt{wiki}~\cite{lim2021large} are recently proposed large-scale heterophilous datasets. For the \texttt{wiki} dataset we remove all isolated nodes. \texttt{Roman-empire}, \texttt{amazon-ratings}, \texttt{minesweeper}, \texttt{workers}, and \texttt{questions} \cite{platonov2022critical} are recently proposed mid-scale heterophilous datasets. We additionally construct one more binary classification graph --- \texttt{twitter-hate}. This graph is based on data from Hateful Users on Twitter dataset on Kaggle.\footnote{\url{https://www.kaggle.com/datasets/manoelribeiro/hateful-users-on-twitter}} The labels indicate if a user posts hateful messages or not. We remove all the unlabeled nodes from the graph and use the largest connected component of the resulting graph.

We transform all the considered graphs to simple undirected graphs and remove self-loops.

\begin{table*}
\caption{Dataset characteristics, more homophilous datasets are above the line}
\label{tab:datasets}
\centering
\begin{small}
\begin{tabular}{lrrrcccccc}
\toprule
Dataset & $n$ & $|E|$ & $C$ & $h_{edge}$ & $h_{node}$ & $h_{class}$ & $h_{adj}$ & $\mathrm{LI}_{edge}$ & $\mathrm{LI}_{node}$ \\
\midrule
cora & 2708 & 5278 & 7 & 0.81 & 0.83 & 0.77 & 0.77 & 0.59 & 0.61 \\
citeseer & 3327 & 4552 & 6 & 0.74 & 0.72 & 0.63 & 0.67 & 0.45 & 0.45 \\
pubmed & 19717 & 44324 & 3 & 0.80 & 0.79 & 0.66 & 0.69 & 0.41 & 0.40 \\
coauthor-cs & 18333 & 81894 & 15 & 0.81 & 0.83 & 0.75 & 0.78 & 0.65 & 0.68 \\
coauthor-physics & 34493 & 247962 & 5 & 0.93 & 0.92 & 0.85 & 0.87 & 0.72 & 0.76 \\
amazon-computers & 13752 & 245861 & 10 & 0.78 & 0.80 & 0.70 & 0.68 & 0.53 & 0.62 \\
amazon-photo & 7650 & 119081 & 8 & 0.83 & 0.85 & 0.77 & 0.79 & 0.67 & 0.72 \\
lastfm-asia & 7624 & 27806 & 18 & 0.87 & 0.83 & 0.77 & 0.86 & 0.74 & 0.68 \\
facebook & 22470 & 170823 & 4 & 0.89 & 0.88 & 0.82 & 0.82 & 0.62 & 0.74 \\
github & 37700 & 289003 & 2 & 0.85 & 0.80 & 0.38 & 0.38 & 0.13 & 0.15 \\
twitter-hate & 2700 & 11934 & 2 & 0.78 & 0.67 & 0.50 & 0.55 & 0.23 & 0.51 \\
ogbn-arxiv & 169343 & 1157799 & 40 & 0.65 & 0.64 & 0.42 & 0.59 & 0.45 & 0.53 \\
ogbn-products & 2449029 & 61859012 & 47 & 0.81 & 0.83 & 0.46 & 0.79 & 0.68 & 0.72 \\
\midrule
%squirrel & 5201 & 198353 & 5 & 0.22 & 0.22 & 0.03 & 0.01 & 0.00 & 0.00 \\
%chameleon & 2277 & 31371 & 5 & 0.23 & 0.25 & 0.04 & 0.03 & 0.05 & 0.06 \\
actor & 7600 & 26659 & 5 & 0.22 & 0.22 & 0.01 & 0.00 & 0.00 & 0.00 \\
flickr & 89250 & 449878 & 7 & 0.32 & 0.32 & 0.07 & 0.09 & 0.01 & 0.01 \\
deezer-europe & 28281 & 92752 & 2 & 0.53 & 0.53 & 0.03 & 0.03 & 0.00 & 0.00 \\
twitch-de & 9498 & 153138 & 2 & 0.63 & 0.60 & 0.14 & 0.14 & 0.02 & 0.03 \\
twitch-pt & 1912 & 31299 & 2 & 0.57 & 0.59 & 0.12 & 0.11 & 0.01 & 0.02 \\
twitch-gamers & 168114 & 6797557 & 2 & 0.55 & 0.56 & 0.09 & 0.09 & 0.01 & 0.02 \\
genius & 421961 & 922868 & 2 & 0.59 & 0.51 & 0.02 & -0.05 & 0.00 & 0.17 \\
arxiv-year & 169343 & 1157799 & 5 & 0.22 & 0.29 & 0.07 & 0.01 & 0.04 & 0.12 \\
snap-patents & 2923922 & 13972547 & 5 & 0.22 & 0.21 & 0.04 & 0.00 & 0.02 & 0.00 \\
wiki & 1770981 & 242605360 & 5 & 0.38 & 0.28 & 0.17 & 0.15 & 0.06 & 0.04 \\
roman-empire & 22662 & 32927 & 18 & 0.05 & 0.05 & 0.02 & -0.05 & 0.11 & 0.11 \\
amazon-ratings & 24492 & 93050 & 5 & 0.38 & 0.38 & 0.13 & 0.14 & 0.04 & 0.04 \\
minesweeper & 10000 & 39402 & 2 & 0.68 & 0.68 & 0.01 & 0.01 & 0.00 & 0.00 \\
workers & 11758 & 519000 & 2 & 0.59 & 0.63 & 0.18 & 0.09 & 0.01 & 0.02 \\
questions & 48921 & 153540 & 2 & 0.84 & 0.90 & 0.08 & 0.02 & 0.00 & 0.01 \\
\bottomrule
\end{tabular}
\end{small}
\end{table*}

%Penn94 0.046
%pokec 0.000
%arXiv-year 0.272
%snap-patents 0.073 0.100
%genius 0.618 0.080
%twitch-gamers 0.090

\subsection{Dataset characteristics}

Dataset characteristics are shown in Table~\ref{tab:datasets}. This table extends Table~\ref{tab:datasets-selected} in the main text. It can be seen that the typically used homophily measures~--- $h_{edge}$ and $h_{node}$~--- often overestimate homophily levels, since they do not take into account the number of classes and class size balance. This is particularly noticeable for datasets with two classes. If fact, according to these measures all binary classification datasets in our table are homophilous. In contrast, $h_{adj}$ corrects for the expected number of edges between classes and shows that most of the considered binary classification datasets are actually heterophilous (\texttt{github} and \texttt{twitter-hate} being the exceptions).

As for label informativeness, it can be seen that on real heterophilous datasets it is typically very low (with the exception of \texttt{roman-empire} dataset). This is in contrast to synthetic datasets used for experiments in~\cite{luan2021heterophily, ma2021homophily}, which sometimes exhibit a combination of low homophily and high label informativeness. High label informativeness of these datasets leads to strong GNN performance on them despite low homophily levels.

Let us note that edge and node label informativeness differ in how they weight high/low-degree nodes. For node label informativeness, averaging is over the nodes, so all nodes are weighted equally. For edge label informativeness, averaging is over the edges, which implies that high-degree nodes make a larger contribution to the final measure. It is natural to expect that for high-degree nodes the amount of information from each individual neighbor is lower than for low-degree nodes since neighbors of high-degree nodes are expected to be more diverse and closer to the ``average'' distribution. Thus, it is natural to expect edge label informativeness to often be smaller than node label informativeness, which agrees with Table~\ref{tab:datasets}. Note that for most of the real datasets, the values of these two measures are rather close, with \texttt{twitter-hate}, \texttt{genius}, and \texttt{arxiv-year} being the exceptions.

\begin{figure*}[t]
\centering
\includegraphics[width=0.85\textwidth]{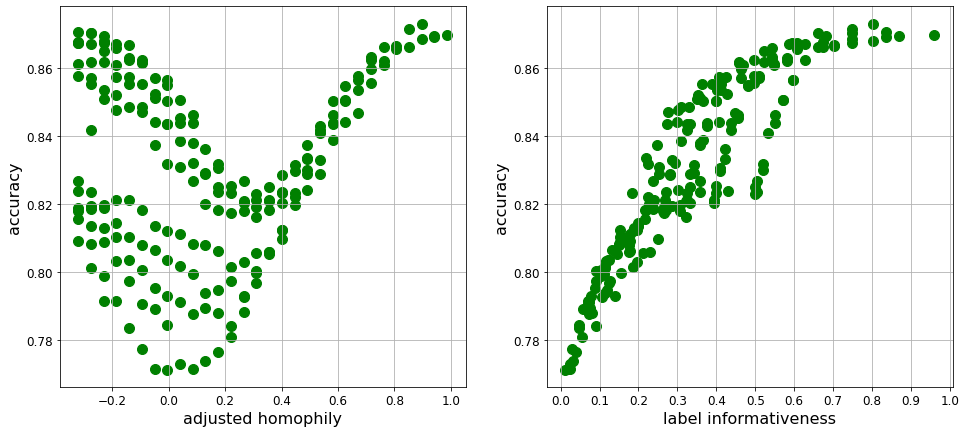}
\caption{Dependence of GraphSAGE accuracy on homophily and label informativeness for synthetic SBM graphs}\label{fig:two_plots}
\end{figure*}

\section{Correlation of LI with GNN performance}\label{app:experiments}

\subsection{Experimental setup}\label{app:setup}

For our experiments on synthetic data, we select 208 combinations of homophily and $\mathrm{LI}$. For each combination, we generate 10 random graphs with the corresponding homophily and $\mathrm{LI}$  values using the SBM-based model described in Section~\ref{sec:synthetic}. Each graph has 1000 nodes (250 nodes for each class) and the expected node degree of 10. Node features are taken from the four largest classes in the \texttt{cora} dataset (each of these four classes is mapped to one class in the synthetic data, and node features are sampled randomly from the corresponding class). For each synthetic graph, we create 10 random 50\%/25\%/25\% train/validation/test splits. Thus, for each model, we make 10 runs per graph or 100 runs per homophily/$\mathrm{LI}$  combination, totaling 20800 runs.

We use \textbf{GCN}~\cite{kipf2016semi}, \textbf{GraphSAGE}~\cite{hamilton2017inductive}, \textbf{GAT}~\cite{velivckovic2017graph} and \textbf{Graph Transformer (GT)}~\cite{shi2020masked} as representative GNN architectures for our experiments. For GraphSAGE, we use the version with the mean aggregation function and do not use the node sampling technique used in the original paper. For all models, we add a two-layer MLP after every graph neighborhood aggregation layer and further augment all models with skip connections~\cite{he2016deep}, layer normalization~\cite{ba2016layer}, and GELU activation functions~\cite{hendrycks2016gaussian}. For all models, we use two graph neighborhood aggregation layers and hidden dimension of 512. We use Adam~\cite{kingma2014adam} optimizer with a learning rate of $3 \cdot 10^{-5}$ and train for 1000 steps, selecting the best step based on the validation set performance. We use a dropout probability of 0.2 during training. Our models are implemented using PyTorch~\cite{paszke2019pytorch} and DGL~\cite{wang2019deep}.

\subsection{Synthetic data based on SBM model}\label{app:synthetic}

In Figure~\ref{fig:graphsage} of the main text, we show the results for GraphSAGE. In Figure~\ref{fig:two_plots}, we additionally plot the dependence of GraphSAGE accuracy on homophily and label informativeness. It can be seen that the model's accuracy is much more correlated with label informativeness than with homophily. The results for GCN, GAT, and GT are presented in Figures~\ref{fig:gcn}, \ref{fig:gat}, and~\ref{fig:gt}, respectively. As can be seen, the performance of all models generally follows the same pattern, and $\mathrm{LI}$  is a better predictor of model performance than homophily. In particular, when $\mathrm{LI}$ is high, all models achieve high accuracy even if homophily is negative. Recall that we provide Spearman correlation coefficients between model accuracy and adjusted homophily or $\mathrm{LI}$ in Table~\ref{tab:correlations-sbm}.

\subsection{Semi-synthetic data from~\cite{ma2021homophily}}\label{app:semi-synthetic}

We use the same experimental setting described above for training GCN, GraphSAGE, GAT, and GT on modifications of the \texttt{cora} and \texttt{citeseer} graphs from~\cite{ma2021homophily}. The results for GraphSAGE and \texttt{cora} are provided in Figure~\ref{fig:cora-graphsage} of the main text. The results for GraphSAGE and \texttt{citeseer} are shown in Figure~\ref{fig:citeseer-graphsage}. 

\subsection{Synthetic data based on LFR model}\label{app:lfr}

\begin{wraptable}[13]{r}{0.35\linewidth}
\caption{Spearman correlation between model accuracy and characteristics of LFR graphs}
%Note that due to the balanced classes and degrees, the measures $h_{node}$, $h_{edge}$, and $h_{adj}$ give exactly the same correlation.}
\label{tab:correlations-lfr}
\centering
\begin{tabular}{lcc}
\toprule
Model & $h_{adj}$ & $\mathrm{LI}_{edge}$ \\
\midrule
GCN & 0.79 & \textbf{0.96} \\
GraphSAGE & 0.68 & \textbf{0.99} \\
GAT & 0.77 & \textbf{0.97} \\
GT & 0.77 & \textbf{0.97} \\
\bottomrule
\end{tabular}
\end{wraptable}

We also experiment with the LFR random graph model, which is a well-known benchmark for community detection (we treat communities as node labels). We generated 39 LFR graphs with the mixing parameter values evenly spaced between 0.025 and 0.975. Each graph has 1000 nodes and 5 communities. The maximum community size is 350, and the minimum community size is 100. The average node degree is 10, and the maximum node degree is 50. Note that homophily monotonically changes with the mixing parameter, while $\mathrm{LI}$ is $U$-shaped. Similarly to our experiments with SBM, we measure the Spearman correlation coefficient between accuracy and $\mathrm{LI}$ / $h_{adj}$, see Table~\ref{tab:correlations-lfr}. We see that $\mathrm{LI}$ is a better predictor of GNN performance than homophily.

\begin{minipage}{0.48\textwidth}
%\begin{figure}
\includegraphics[width=\textwidth]{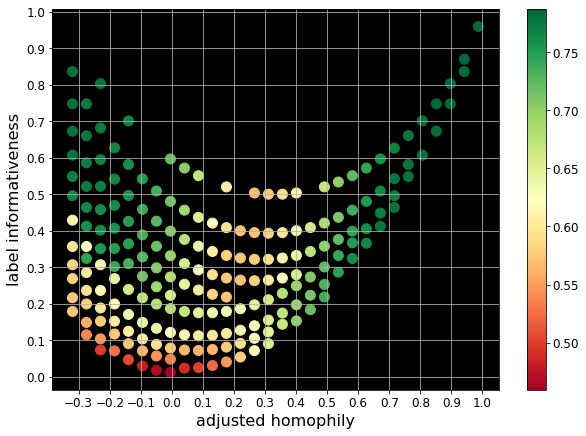}
\captionof{figure}{Accuracy of GCN on synthetic SBM graphs}\label{fig:gcn}
%\end{figure}
\end{minipage}
\hspace{0.02\textwidth}
\begin{minipage}{0.48\textwidth}
%\begin{figure}
\includegraphics[width=\textwidth]{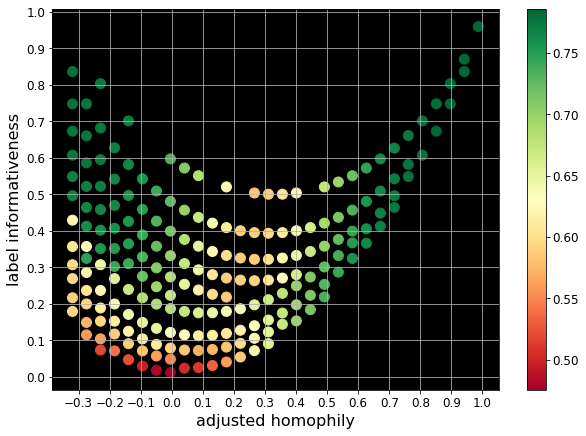}
\captionof{figure}{Accuracy of GAT on synthetic SBM graphs}\label{fig:gat}
%\end{figure}
\end{minipage}

\begin{minipage}{0.48\textwidth}
%\begin{figure}
\includegraphics[width=\textwidth]{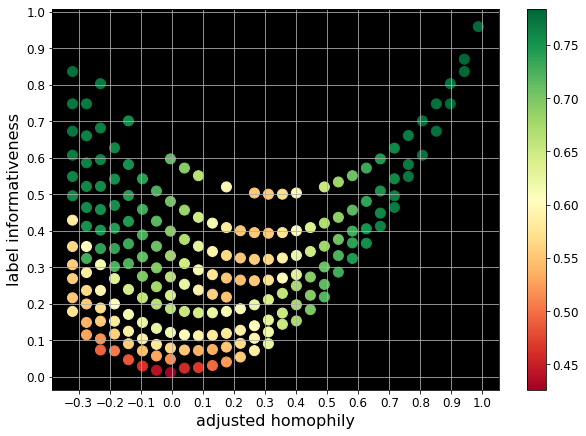}
\captionof{figure}{Accuracy of GT on synthetic SBM graphs}\label{fig:gt}
%\end{figure}
\end{minipage}
\hspace{0.02\textwidth}
\begin{minipage}{0.48\textwidth}
%\begin{figure}[t!]
\includegraphics[width=\textwidth]{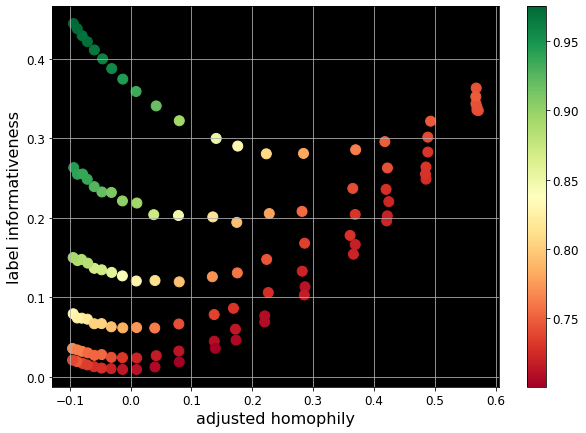}
\captionof{figure}{Accuracy of GraphSAGE on semi-synthetic \texttt{citeseer} graphs from~\cite{ma2021homophily}}\label{fig:citeseer-graphsage}
%\end{figure}
\end{minipage}

\subsection{Synthetic data from~\cite{luan2021heterophily}}\label{app:Luan}

\citet{luan2021heterophily} have shown that GNNs can achieve strong performance on certain heterophilous graphs. Again, this phenomenon can be explained by the high label informativeness of the heterophilous graphs used for these experiments.

\begin{wrapfigure}{r}{0.4\textwidth}
\vspace{-10pt}
\begin{center}
\includegraphics[width=0.4\textwidth]{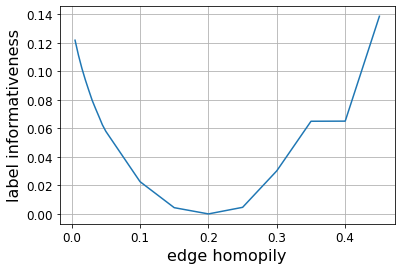}
\end{center}
\caption{Label informativeness depending on edge homophily on synthetic graphs from~\cite{luan2021heterophily} }\label{fig:label_informativeness}
\end{wrapfigure}

The authors investigate how different levels of homophily affect GNN performance. They find that the curve showing the dependence of GNN performance on edge homophily (as well as on node homophily) is $U$-shaped: GNNs show strong results not only when edge homophily is high, but also when edge homophily is very low. Our label informativeness explains this behavior. We use the same data generating process as in~\cite{luan2021heterophily} and find that the curve of label informativeness depending on edge homophily is also $U$-shaped (see Figure~\ref{fig:label_informativeness}). Thus, on the datasets from~\cite{luan2021heterophily}, GNNs perform well exactly when label informativeness is high regardless of edge homophily. The $U$-shape of the label informativeness curve is not surprising: when edge homophily is very low, knowing that a node has a neighbor of a certain class provides us with information that this node probably \emph{does not belong} to this class.

\end{document}